\newcommand{\maps}{\colon}    
\newcommand{\R}{{\mathbb R}}  
\newcommand{\C}{{\mathbb C}}  
\renewcommand{\H}{{\mathbb H}}  
\renewcommand{\O}{{\mathbb O}}  
\newcommand{\K}{{\mathbb K}}  
\newcommand{\Z}{{\mathbb Z}}  
\renewcommand{\Re}{\mathrm{Re}} 
\renewcommand{\Im}{\mathrm{Im}} 
\newcommand{\Retr}{\Re \; \tr} 
\newcommand{\h}{\mathfrak{h}} 
\newcommand{\tr}{{\mathrm{tr}}} 
\newcommand{\U}{{\rm U}}    
\newcommand{\Spin}{{\rm Spin}}    
\newcommand{\so}{{\mathfrak{so}}}  
\newcommand{\gl}{{\mathfrak{gl}}}  
\newcommand{\g}{\mathfrak{g}}  
\newcommand{\T}{\mathcal{T}} 
\renewcommand{\b}{{\mathfrak{b}}}  
\newcommand{\siso}{\mathfrak{siso}} 
\newcommand{\sugra}{\mathfrak{sugra}} 
\newcommand{\brane}{\mathfrak{brane}} 
\newcommand{\superstring}{\mathfrak{superstring}} 
\newcommand{\End}{{\rm End}} 
\newcommand{\Hom}{{\rm Hom}} 
\newcommand{\Sym}{{\rm Sym}} 
\newcommand{\Cliff}{{\rm Cliff}}    
\newcommand{\ev}{{\rm ev}} 
\newcommand{\inclusion}{\hookrightarrow}
\newcommand{\iso}{\cong} 
\newcommand{\tensor}{\otimes} 
\newcommand{\fourth}{\frac{1}{4}} 
\newcommand{\V}{\mathcal{V}} 
\renewcommand{\S}{\mathcal{S}} 
\newcommand{\A}{\mathcal{A}} 
\newcommand{\B}{\mathcal{B}} 
\newcommand{\Psibar}{\overline{\Psi}} 
\newcommand{\chibar}{\overline{\chibar}} 
\newcommand{\define}[1]{{\bf #1}}
\newtheorem{thm}{Theorem}    
\newtheorem{cor}[thm]{Corollary}
\newtheorem{prop}[thm]{Proposition}
\theoremstyle{definition}
\newtheorem{defn}[thm]{Definition}
        \newcommand{\be}{\begin{equation}}
        \newcommand{\ee}{\end{equation}}
        \newcommand{\ba}{\begin{eqnarray}}
        \newcommand{\ea}{\end{eqnarray}}
        \newcommand{\ban}{\begin{eqnarray*}}
        \newcommand{\ean}{\end{eqnarray*}}
        \newcommand{\barr}{\begin{array}}
        \newcommand{\earr}{\end{array}}
\title{Division Algebras and Supersymmetry II} 
\author{John C.\ Baez and John Huerta \\
\\
Department of Mathematics \\
University of California \\
Riverside, CA 92521 USA 
}
\date{July 27, 2010} 
\begin{document}
\maketitle

\begin{abstract}
\noindent
Starting from the four normed division algebras---the real numbers,
complex numbers, quaternions and octonions---a systematic procedure
gives a 3-cocycle on the Poincar\'e Lie superalgebra in dimensions 3,
4, 6 and 10.  A related procedure gives a 4-cocycle on the Poincar\'e
Lie superalgebra in dimensions 4, 5, 7 and 11.  In general, an
$(n+1)$-cocycle on a Lie superalgebra yields a `Lie $n$-superalgebra':
that is, roughly speaking, an $n$-term chain complex equipped with a
bracket satisfying the axioms of a Lie superalgebra up to chain
homotopy.  We thus obtain Lie 2-superalgebras extending the Poincar\'e
superalgebra in dimensions 3, 4, 6, and 10, and Lie 3-superalgebras
extending the Poincar\'e superalgebra in dimensions 4, 5, 7 and 11.
As shown in Sati, Schreiber and Stasheff's work on higher gauge
theory, Lie 2-superalgebra connections describe the parallel transport
of strings, while Lie 3-superalgebra connections describe the parallel
transport of 2-branes.  Moreover, in the octonionic case, these
connections concisely summarize the fields appearing in 10- and
11-dimensional supergravity.
\end{abstract}

\vskip 2em

\section{Introduction} \label{sec:intro}

There is a deep connection between supersymmetry and the four normed
division algebras: the real numbers, complex numbers, 
quaternions and octonions.  This can be seen in
super-Yang--Mills theory, in superstring theory, and in theories of
supermembranes and supergravity.  Most simply, the connection is
visible from the fact that the normed division algebras have
dimensions 1, 2, 4 and 8, while classical superstring theories and
minimal super-Yang--Mills theories live in spacetimes of
dimension \emph{two} higher: 3, 4, 6 and 10.  The simplest
classical super-2-brane theories make sense in spacetimes of dimensions 
\emph{three} higher: 4, 5, 7 and 11.  
Classical supergravity makes sense in all of these dimensions, but the
octonionic cases are the most important: in 10 dimensions supergravity
is a low-energy limit of superstring theory, while in 11 dimensions it 
is believed to be a low-energy limit of `M-theory', which incorporates 
the 2-brane.

These numerical relationships are far from coincidental.  They arise
because we can use the normed division algebras to construct the
spacetimes in question, as well as their associated spinors.  A
certain spinor identity that holds in dimensions 3, 4, 6 and
10 is an easy consequence of this construction, as is a related
identity that holds in dimensions 4, 5, 7 and 11.  These identities
are fundamental to the physical theories just listed.

\vfill \eject

In a bit more detail, suppose $\K$ is a normed division algebra of 
dimension $n$.   There are just four examples:
\begin{itemize}
\item the real numbers $\R$ ($n = 1$),
\item the complex numbers $\C$ ($n = 2$),
\item the quaternions $\H$ ($n = 4$),
\item the octonions $\O$ ($n = 8$).
\end{itemize}
Then we can identify vectors in $(n+2)$-dimensional Minkowski
spacetime with $2 \times 2$ hermitian matrices having entries in $\K$.
Similarly, we can identify spinors with elements of $\K^2$.  Matrix
multiplication then gives a way for vectors to act on spinors.  There
is also an operation that takes two spinors $\psi$ and $\phi$ and
forms a vector $\psi \cdot \phi$.  Using elementary properties of
normed division algebras, we can prove that
\[               (\psi \cdot \psi) \psi = 0 . \]
Following Schray \cite{Schray}, we call this identity the `3-$\psi$'s rule'. This identity is an example of a `Fierz identity'---roughly, 
an identity that allows one to reorder multilinear expressions made 
of spinors. This can be made more visible in the 3-$\psi$'s rule if 
we polarize the above cubic form to extract a genuinely trilinear expression:
\[               (\psi \cdot \phi) \chi + (\phi \cdot \chi) \psi + (\chi \cdot \psi) \phi = 0 . \]

In fact, the 3-$\psi$'s rule holds \emph{only} when Minkowski
spacetime has dimension 3, 4, 6 or 10.  Moreover, it is crucial for
super-Yang--Mills theory and superstring theory in these dimensions.
In minimal super-Yang--Mills theory, we need the 3-$\psi$'s rule to
check that the Lagrangian is supersymmetric, thanks to an argument
reviewed in our previous paper \cite{BaezHuerta}.  In superstring
theory, we need it to check the supersymmetry of the Green--Schwarz
Lagrangian \cite{GreenSchwarz, GreenSchwarzWitten}.  But the
3-$\psi$'s rule also has a deeper significance, which we study here.

This deeper story involves not only the 3-$\psi$'s rule but also
the `4-$\Psi$'s rule', a closely related Fierz identity required 
for super-2-brane theories in dimensions 4, 5, 7 and 11.  To help the
reader see the forest for the trees, we present a rough summary of 
this story in the form of a recipe:

\begin{enumerate}

	\item Spinor identities that come from division algebras are
		\emph{cocycle conditions}.

	\item The corresponding cocycles allow us to extend the Poincar\'e
		Lie superalgebra to a higher structure, a \emph{Lie
		$n$-superalgebra}.

	\item Connections valued in these Lie
		$n$-superalgebras describe the \emph{field content} of
		superstring and super-2-brane theories.

\end{enumerate}

To begin our story in dimensions 3, 4, 6 and 10, let us first introduce some
suggestive terminology: despite our notation, we shall call $\psi \cdot \phi$
the \define{bracket of spinors}. This is because this function is symmetric,
and it defines a Lie superalgebra structure on the supervector space
\[ \T = V \oplus S \]
where the even subspace $V$ is the vector representation of
$\Spin(n+1,1)$, while the odd subspace $S$ is a certain spinor
representation.  This Lie superalgebra is called the
\define{supertranslation algebra}.

There is a cohomology theory for Lie superalgebras, sometimes called
Chevalley--Eilenberg cohomology.  The cohomology of $\T$ will play a
central role in what follows.  Why?  First, because the 3-$\psi$'s is
really a cocycle condition, for a 3-cocycle $\alpha$ on $\T$ which eats 
two spinors and a vector and produces a number as follows:
\[ \alpha(\psi, \phi, A) = \langle \psi, A \phi \rangle. \]
Here, $\langle -,- \rangle$ is a pairing between spinors.  
Since this 3-cocycle is Lorentz-invariant, it extends to a cocycle
on the Poincar\'e superalgebra
\[    \siso(n+1,1) \iso \so(n+1,1) \ltimes \T  .\]
In fact, we obtain a nonzero element of the third cohomology of
the Poincar\'e superalgebra this way.

Just as 2-cocycles on a Lie superalgebra give ways of 
extending it to larger Lie superalgebras, 3-cocycles give
extensions to \emph{Lie 2-superalgebras}.  To understand this, we
need to know a bit about $L_\infty$-algebras \cite{MSS,SS}.  An
$L_\infty$-algebra is a chain complex equipped with a structure like
that of a Lie algebra, but where the laws hold only `up to $d$ of
something'.  A Lie $n$-algebra is an $L_\infty$-algebra in which only
the first $n$ terms are nonzero.  All these ideas also have `super'
versions.  In particular, we can use the 3-cocycle $\alpha$ to extend
$\siso(n+1,1)$ to a Lie 2-superalgebra of the following form:
\[ \xymatrix{ \siso(n+1, 1) & \R \ar[l]_<<<<<d }. \]
We call this the `superstring Lie 2-superalgebra', and denote it as 
$\superstring(n+1,1)$.   

The superstring Lie 2-superalgebra is an extension of $\siso(n+1,1)$ by 
$\b\R$, the Lie 2-algebra with $\R$ in degree 1 and everything else
trivial.  By `extension', we mean that there is a short exact sequence
of Lie 2-superalgebras:
\[         0 \to b\R \to \superstring(n+1,1) \to \siso(n+1,1) \to 0 . \]
To see precisely what this means, let us expand it a bit.  Lie
2-superalgebras are 2-term chain complexes, and writing these
vertically, our short exact sequence looks like this:
\[ 
\def\objectstyle{\scriptstyle}
\xymatrix { 
            0 \ar[r] & \R \ar[r] \ar[d]_d & \R \ar[r] \ar[d]_d    & 0 \ar[r] \ar[d]_d     & 0 \\
            0 \ar[r] & 0 \ar[r]         & \siso(n+1,1) \ar[r] & \siso(n+1,1) \ar[r] & 0 \\
} \]
In the middle, we see $\superstring(n+1,1)$.  This Lie 2-superalgebra
is built from two pieces: $\siso(n+1,1)$ in degree $0$ and $\R$ in
degree $1$.  But since the cocycle $\alpha$ is nontrivial, these two
pieces still interact in a nontrivial way.  Namely, the Jacobi
identity for three 0-chains holds only up to $d$ of a 1-chain.  So,
besides its Lie bracket, the Lie 2-superalgebra $\superstring(n+1,1)$
also involves a map that takes three 0-chains and gives a 1-chain.
This map is just $\alpha$.

What is the superstring Lie 2-algebra good for?  The answer lies in a
feature of string theory called the `Kalb--Ramond field', or `$B$
field'.  The $B$ field couples to strings just as the $A$ field in
electromagnetism couples to charged particles.  The $A$ field is
described locally by a 1-form, so we can integrate it over a
particle's worldline to get the interaction term in the Lagrangian for
a charged particle.  Similarly, the $B$ field is described locally by
a 2-form, which we can integrate over the worldsheet of a string.

Gauge theory has taught us that the $A$ field has a beautiful
geometric meaning: it is a connection on a $\U(1)$ bundle over
spacetime.  What is the corresponding meaning of the $B$ field?  It
can be seen as a connection on a `$\U(1)$ gerbe': a gadget like a
$\U(1)$ bundle, but suitable for describing strings instead of point
particles.  Locally, connections on $\U(1)$ gerbes can be identified
with 2-forms.  But globally, they cannot.  The idea that the $B$ field
is a $\U(1)$ gerbe connection is implicit in work going back at least
to the 1986 paper by Gawedzki \cite{Gawedzki}.  More recently, Freed
and Witten \cite{FreedWitten} showed that the subtle difference
between 2-forms and connections on $\U(1)$ gerbes is actually crucial
for understanding anomaly cancellation.  In fact, these authors used
the language of `Deligne cohomology' rather than gerbes.  Later work
made the role of gerbes explicit: see for example Carey, Johnson and
Murray \cite{CareyJohnsonMurray}, and also Gawedzki and Reis
\cite{GawedzkiReis}.

More recently still, work on higher gauge theory has revealed that the
$B$ field can be viewed as part of a larger package.  Just as gauge
theory uses Lie groups, Lie algebras, and connections on bundles to to
describe the parallel transport of point particles, higher gauge
theory generalizes all these concepts to describe parallel transport
of extended objects such strings and membranes
\cite{BaezHuerta:invitation, BaezSchreiber}.  In particular, Schreiber, Sati and
Stasheff \cite{SSS} have developed a theory of `$n$-connections'
suitable for describing parallel transport of objects with
$n$-dimensonal worldvolumes.  In their theory, the Lie algebra of the
gauge roup is replaced by an Lie $n$-algebra---or in the
supersymmetric context, a Lie $n$-superalgebra.  Applying their ideas
to $\superstring(n+1,1)$, we get a 2-connection which can be described
locally using the following fields:
\vskip 1em
\begin{center}
\begin{tabular}{cc}
	\hline
	$\superstring(n+1,1)$ & Connection component \\
	\hline
	$\R$                  & $\R$-valued 2-form \\
	$\downarrow$          & \\
	$\siso(n+1,1)$        & $\siso(n+1,1)$-valued 1-form \\
	\hline
\end{tabular}
\end{center}
\vskip 1em
The $\siso(n+1,1)$-valued 1-form consists of three fields which help
define the background geometry on which a superstring propagates: the
Levi-Civita connection $A$, the vielbein $e$, and the gravitino
$\psi$.  But the $\R$-valued 2-form is equally important in the
description of this background geometry: it is the $B$ field!

Next let us extend these ideas to Minkowski spacetimes one dimension 
higher: dimensions 4, 5, 7 and 11.  In this case a certain subspace of 
$4 \times 4$ matrices with entries in $\K$ will form the vector representation of $\Spin(n+2, 1)$, while $\K^4$ will form a spinor 
representation.  As before, there is a `bracket' operation that takes two spinors $\Psi$ and $\Phi$ and gives a vector $\Psi \cdot \Phi$.  As 
before, there is an action of vectors on spinors.  This time the 3-
$\psi$'s rule no longer holds:
\[               (\Psi \cdot \Psi) \Psi \neq 0 . \]
However, we show that
\[ \Psi \cdot ((\Psi \cdot \Psi) \Psi) = 0 . \]
We call this the `4-$\Psi$'s rule'.  This identity plays a basic role 
for the super-2-brane, and related theories of supergravity.

Once again, the bracket of spinors defines a Lie superalgebra structure 
on the supervector space
\[ \T = \V \oplus \S \]
where now $\V$ is the vector representation of $\Spin(n+2,1)$, while
$\S$ is a certain spinor representation of this group.  Once again, the
cohomology of $\T$ plays a key role.  The 4-$\Psi$'s rule is a cocycle
condition---but this time for a 4-cocycle $\beta$ which eats two
spinors and two vectors and produces a number as follows:
\[ \beta(\Psi, \Phi, \A, \B) = \langle \Psi, (\A \wedge \B) \Phi \rangle. \]
Here, $\langle -, - \rangle$ denotes the inner product of two spinors, 
and the bivector $\A \wedge \B$ acts on $\Phi$ via the usual Clifford 
action. Since $\beta$ is Lorentz-invariant, we shall see that it 
extends to a 4-cocycle on the Poincar\'e superalgebra $\siso(n+2,1)$.

We can use $\beta$ to extend the Poincar\'e superalgebra to a Lie
3-superalgebra of the following form:
\[ \xymatrix{ \siso(n+2, 1) & 0 \ar[l]_<<<<<d & \R \ar[l]_d }. \]
We call this the `2-brane Lie 3-superalgebra', and denote it as 
$\mbox{2-}\brane(n+1,1)$.   It is an extension of $\siso(n+2,1)$ by 
$\b^2\R$, the Lie 3-algebra with $\R$ in degree 2, and everything
else trivial.  In other words, there is a short exact sequence:
\[ 0 \to \b^2\R \to \mbox{2-}\brane(n+2,1) \to \siso(n+2,1) \to 0 \]
Again, to see what this means, let us expand it a bit.  Lie
3-superalgebras are 3-term chain complexes.  Writing out each of these 
vertically, our short exact sequence looks like this:
\[ 
\def\objectstyle{\scriptstyle}
\xymatrix { 
            0 \ar[r]  & \R \ar[r] \ar[d]_d & \R \ar[r] \ar[d]_d    & 0 \ar[r] \ar[d]_d     & 0  \\
            0 \ar[r]  & 0 \ar[r] \ar[d]_d  & 0 \ar[r] \ar[d]_d     & 0 \ar[r] \ar[d]_d     & 0  \\
            0 \ar[r]        & 0 \ar[r]         & \siso(n+2,1) \ar[r] & \siso(n+2,1) \ar[r] & 0 \\
} \]
In the middle, we see 2-$\brane(n+2,1)$.

The most interesting Lie 3-algebra of this type, 2-$\brane(10,1)$, 
plays an important role in 11-dimensional supergravity.  This idea goes back to the work of Castellani, D'Auria and Fr\'e \cite
{TheCube, DAuriaFre}.  These authors derived the field content of 11-
dimensional supergravity starting from a differential graded commutative algebra.  Later, Sati, Schreiber and Stasheff 
\cite{SSS} explained that these fields can be reinterpreted as a
3-connection valued in a Lie 3-algebra which they called 
`$\sugra(10,1)$'.  This is the Lie 3-algebra we are calling 
2-$\brane(10,1)$.  Our message here is that the all-important 
cocycle needed to construct this Lie 3-algebra arises naturally from 
the octonions, and has analogues for the other normed division 
algebras.

If we follow these authors and consider a 3-connection valued in 
2-$\brane(10,1)$, we find it can be described locally by these fields:
\vskip 1em
\begin{center}
\begin{tabular}{cc}
	\hline
	2-$\brane(n+2,1)$ & Connection component \\
	\hline
	$\R$            & $\R$-valued 3-form \\
	$\downarrow$    & \\
	$0$             & \\
	$\downarrow$    & \\
	$\siso(n+2,1)$  & $\siso(n+2,1)$-valued 1-form \\
	\hline
\end{tabular}
\end{center}
\vskip 1em
Again, a $\siso(n+2,1)$-valued 1-form contains familiar fields: the
Levi-Civita connection, the vielbein, and the gravitino.  But now we
also see a 3-form, called the $C$ field.  This is again something we
might expect on physical grounds, at least in dimension 11.  While
the case is less clear than in string theory, it seems that for the
quantum theory of a 2-brane to be consistent, it must propagate in
a background obeying the equations of 11-dimensional supergravity, 
in which the $C$ field naturally shows up \cite{Tanii}.  The work 
of Diaconescu, Freed, and Moore \cite{DiaconescuFreedMoore}, as well
as that of Aschieri and Jurco \cite{AschieriJurco}, is also relevant here.

Finally, we mention another use for the cocycles $\alpha$ and $\beta$.
These cocycles are also used to build Wess--Zumino--Witten terms for
superstrings and 2-branes. For example, in the case of the string, one
can extend the string's worldsheet to be the boundary of a
three-dimensional manifold, and then integrate $\alpha$ over this
manifold.  This provides an additional term for the action of the
superstring, a term that is required to give the action Siegel
symmetry, balancing the number of bosonic and fermionic degrees of
freedom. For the 2-brane, the Wess--Zumino--Witten term is constructed
in complete analogy---we just `add one' to all the dimensions in
sight \cite{AchucarroEvans,Duff}.

Indeed, the network of relationships between supergravity, string and
2-brane theories, and cocycles constructed using normed division
algebras is extremely tight.  The Siegel symmetry of the string or
2-brane action constrains the background of the theory to be that of
supergravity, at least in dimensions 10 and 11 \cite{Tanii}, and
without the WZW terms, there would be no Siegel symmetry.  The WZW
terms rely on the cocycles $\alpha$ and $\beta$.  These cocycles also
give rise to the Lie 2- and 3-superalgebras $\superstring(9,1)$ and
2-$\brane(10,1)$.  And these, in turn, describe the field content of
supergravity in these dimensions!

As further grist for this mill, WZW terms can also be viewed in the
context of higher gauge theory.  In string theory, the WZW term is the holonomy of a connection on a $\U(1)$ gerbe \cite{GawedzkiReis}.  Presumably the WZW term in a 2-brane theory is the holonomy of a connection on a $\U(1)$ 2-gerbe \cite{Stevenson}.  This is a 
tantalizing clue that we are at the beginning of a larger but 
ultimately simpler story.

In what follows we focus on the mathematics of constructing Lie
$n$-superalgebras from normed division algebras, rather than the
applications to physics that we have just described.  We begin
with a quick review of normed division algebras in Section 
\ref{sec:divalg}.  In Section \ref{sec:n+2} we recall how an
$n$-dimensional normed division algebra can be used to describe vectors 
and spinors in $(n+2)$-dimensional spacetime, and how this description
yields the 3-$\psi$'s rule.  All this material is treated in more detail 
in our previous paper \cite{BaezHuerta}.  In Section \ref{sec:n+3}
we build on this work and use normed division algebras to describe 
vectors and spinors in $(n+2)$-dimensional spacetime.  In
Section \ref{sec:4psi} we use this description to prove the 
the 4-$\Psi$'s rule.  In Section \ref{sec:cohomology} we describe
the cohomology of Lie superalgebras, and show that the 3-$\psi$'s
rule and 4-$\Psi$'s rule yield nontrivial 3-cocycles and 4-cocycles on 
supertranslation algebras.  In Section \ref{sec:Linfinity} we 
describe Lie $n$-superalgebras, and prove that an $(n+1)$-cocycle on 
a Lie superalgebra gives a way to extend it to a Lie $n$-superalgebra.
We thus obtain Lie 2-superalgebras and Lie 3-superalgebras extending
supertranslation algebras.  In Section \ref{sec:algebras} we conclude
by constructing the superstring Lie 2-algebras and the 2-brane 
Lie 3-algebras.

\section{Division Algebras} \label{sec:divalg}

We begin with a lightning review of normed division algebras.
A \define{normed division algebra} $\K$ is a (finite-dimensional,
possibly nonassociative) real algebra equipped with a multiplicative
unit 1 and a norm $| \cdot |$ satisfying:
\[ |ab| = |a| |b|  \]
for all $a, b \in \K$.  Note this implies that $\K$ has no zero
divisors.  We will freely identify $\R 1 \subseteq \K$ with $\R$.  By
a classic theorem of Hurwitz~\cite{Hurwitz}, there are only four
normed division algebras: the real numbers, $\R$, the complex numbers,
$\C$, the quaternions, $\H$, and the octonions, $\O$.  These algebras
have dimension 1, 2, 4, and 8.

In all four cases, the norm can be defined using conjugation. Every normed
division algebra has a \define{conjugation} operator---a linear
operator $* \maps \K \to \K$ satisfying
\[ a^{**} = a, \quad (ab)^* = b^* a^* \]
for all $a,b \in \K$.   Conjugation lets us decompose each element of
$\K$ into real and imaginary parts, as follows:
\[ \Re(a) = \frac{a + a^*}{2}, \quad \Im(a) = \frac{a - a^*}{2}. \]
Conjugating changes the sign of the imaginary part and leaves the real part
fixed. We can write the norm as
\[ |a| = \sqrt{a a^*} = \sqrt{a^* a}. \]
This norm can be polarized to give an inner product on $\K$:
\[ (a, b) = \Re(a b^*) = \Re(a^* b). \]

The algebras $\R$, $\C$ and $\H$ are associative. The octonions $\O$
are not.  Instead, they are \define{alternative}: the subalgebra
generated by any two octonions is associative. Note that $\R$, $\C$
and $\H$, being associative, are also trivially alternative. By a
theorem of Artin \cite{Schafer}, this is equivalent to the fact that the
\define{associator}
\[ [a,b,c] = (ab)c - a(bc) \]
is completely antisymmetric in its three arguments.

For any square matrix $A$ with entries in $\K$, we define its
\define{trace} $\tr(A)$ to be the sum of its diagonal entries. This
trace lacks the usual cyclic property, because $\K$ is noncommutative,
so in general $\tr(AB) \neq \tr(BA)$.  Luckily, taking the real part
restores this property:

\begin{prop}
\label{prop:realtrace}
Let $A$, $B$, and $C$ be $k \times \ell$, $\ell \times m$ and
$m \times k$ matrices with entries in $\K$. Then
\[ \Retr((AB)C) = \Retr(A(BC)) \]
and this quantity is invariant under cyclic permutations of $A$, $B$,
and $C$.  We call this quantity the \define{real trace} $\Retr(ABC)$.
\end{prop}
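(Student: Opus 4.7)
The plan is to reduce everything to two elementary facts about the normed division algebra $\K$:

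\medskip

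\noindent\textbf{(i)} $\Re(xy) = \Re(yx)$ for all $x, y \in \K$, and

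\noindent\textbf{(ii)} $\Re[x,y,z] = 0$ for all $x,y,z \in \K$.

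\medskip

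For (i), decompose $x = x_0 + \vec{x}$ and $y = y_0 + \vec{y}$ into real and imaginary parts. Then $\Re(xy) - \Re(yx) = \Re(\vec{x}\vec{y}) - \Re(\vec{y}\vec{x})$, and since $\vec{x}^* = -\vec{x}$, $\vec{y}^* = -\vec{y}$, we have $(\vec{x}\vec{y})^* = \vec{y}\vec{x}$, so $\vec{x}\vec{y} + \vec{y}\vec{x}$ is real and both real parts coincide. For (ii), note that since $[1,y,z]=[x,1,z]=[x,y,1]=0$, the associator depends only on the imaginary parts of its entries, so it suffices to consider pure imaginary $x,y,z$. Using $(ab)^* = b^*a^*$ twice, $[x,y,z]^* = -[z,y,x]$; then the total antisymmetry of the associator in alternative algebras (Artin's theorem, as cited in the excerpt) gives $-[z,y,x] = -[x,y,z]$, hence $[x,y,z] + [x,y,z]^* = 0$.

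With these in hand, I would expand the difference entrywise. For any index $i$,
\[
\bigl((AB)C\bigr)_{ii} - \bigl(A(BC)\bigr)_{ii} = \sum_{j,k}\bigl((a_{ij} b_{jk}) c_{ki} - a_{ij}(b_{jk} c_{ki})\bigr) = \sum_{j,k}[a_{ij}, b_{jk}, c_{ki}],
\]
so after summing over $i$ and taking real parts, fact (ii) applied term-by-term yields $\Retr((AB)C) = \Retr(A(BC))$. This justifies writing $\Retr(ABC)$ unambiguously.

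For cyclic invariance, I would write
\[
\Retr((AB)C) = \sum_{i,j,k} \Re\bigl((a_{ij} b_{jk})\, c_{ki}\bigr),
\]
apply (i) inside each term to get $\Re\bigl(c_{ki}\,(a_{ij} b_{jk})\bigr)$, and then relabel the dummy indices $(i,j,k) \mapsto (j,k,i)$ to recognize the result as $\Retr(C(AB))$. Combined with the associativity just proved, this gives $\Retr(ABC) = \Retr(CAB)$, and one more application yields $\Retr(BCA)$.

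The main obstacle is fact (ii): one must be careful that this genuinely uses alternativity and not just the weaker cyclic property of $\Re$. The combination of $(ab)^* = b^*a^*$ with total antisymmetry of the associator is precisely what makes the real part of the associator vanish; without alternativity (i.e., outside the four normed division algebras) the proposition would fail.
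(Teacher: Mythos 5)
The paper does not actually give a proof of this proposition; it simply notes that it ``relies heavily on the alternativity of normed division algebras'' and refers the reader to Proposition 4 of the prequel, so there is no in-text argument to compare against. Your proof is self-contained, correct in structure, and in exactly the intended spirit: reducing to the two scalar identities $\Re(xy)=\Re(yx)$ and $\Re[x,y,z]=0$ and then expanding the trace entrywise handles both the well-definedness of $\Retr(ABC)$ and its cyclic invariance in one stroke, which is a clean way to organize the argument. One small point to tidy up: in your proof of fact (ii) you have two compensating sign slips. For pure imaginary $x,y,z$ one actually has $[x,y,z]^{*} = -[z^{*},y^{*},x^{*}] = +[z,y,x]$ (the three conjugations contribute $(-1)^{3}$, cancelling the minus from the reversal), and total antisymmetry gives $[z,y,x] = -[x,y,z]$, so $-[z,y,x] = +[x,y,z]$, not $-[x,y,z]$ as you wrote. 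The two errors cancel and you land on the correct conclusion $[x,y,z]^{*} = -[x,y,z]$, hence $\Re[x,y,z]=0$, but the intermediate lines as written are each off by a sign and should be corrected.
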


\begin{proof}
	This relies heavily on the alternativity of normed division algebras.
	See Proposition 4 of our previous paper \cite{BaezHuerta}.
\end{proof}

\section{Spacetime Geometry in $n+2$ Dimensions} \label{sec:n+2}

In this section we recall the relation between a normed division algebra $\K$
of dimension $n$ and Lorentzian geometry in $n+2$ dimensions.  Most of the
material here is well-known \cite{Baez:Octonions, ChungSudbery, KugoTownsend,  
ManogueSudbery, Sudbery}, and we follow the treatment in Section 3 of our
previous paper \cite{BaezHuerta}, much of which we learned from the papers of
Manogue and Schray \cite{Schray, SchrayManogue}.   The key facts are that one
can describe vectors in $(n+2)$-dimensional Minkowski spacetime as $2 \times 2$
hermitian matrices with entries in $\K$, and spinors as elements of $\K^2$.  In
fact there are two representations of $\Spin(n+1,1)$ on $\K^2$, which we call
$S_+$ and $S_-$.  The nature of these representations depends on $\K$:
\begin{itemize}
    \item When $\K = \R$, $S_+ \iso S_-$ is the Majorana spinor
        representation of $\Spin(2,1)$.
    \item When $\K = \C$, $S_+ \iso S_-$ is the Majorana spinor
        representation of $\Spin(3,1)$. 
    \item When $\K = \H$, $S_+$ and $S_-$ are the Weyl spinor
        representations of $\Spin(5,1)$.
    \item When $\K = \O$, $S_+$ and $S_-$ are the Majorana--Weyl
        spinor representations of $\Spin(9,1)$.
\end{itemize} 
As usual, these spinor representations are also
representations of the even part of the relevant Clifford algebras:
\vskip 1em
\begin{center}
\renewcommand{\arraystretch}{1.4}
\begin{tabular}{|lcl|}
	\hline
	\multicolumn{3}{|c|}{\textbf{Even parts of Clifford algebras}} \\
	\hline
	$\Cliff_{\ev}(2,1)$ & $\iso$ & $\R[2]$                \\
	$\Cliff_{\ev}(3,1)$ & $\iso$ & $\C[2]$                \\
	$\Cliff_{\ev}(5,1)$ & $\iso$ & $\H[2] \oplus \H[2]$   \\
	$\Cliff_{\ev}(9,1)$ & $\iso$ & $\R[16] \oplus \R[16]$ \\
	\hline
\end{tabular}
\renewcommand{\arraystretch}{1}
\end{center}
Here we see $\R^2$, $\C^2$, $\H^2$ and $\O^2$ showing up as irreducible
representations of these algebras, albeit with $\O^2$
masquerading as $\R^{16}$. The first two algebras have a unique irreducible
representation.  The last two both have two irreducible representations,
which correspond to left-handed and right-handed spinors.

Our discussion so far has emphasized the differences between the 4
cases.  But the wonderful thing about normed division algebras is that
they allow a unified approach that treats all four cases
simultaneously!  They also give simple formulas for the basic
intertwining operators involving vectors, spinors and scalars.

To begin, let $\K[m]$ denote the space of $m \times m$ matrices with entries in $\K$.
Given $A \in \K[m]$, define its \define{hermitian adjoint} $A^\dagger$ to be
its conjugate transpose:
\[ A^\dagger = (A^*)^T. \]
We say such a matrix is \define{hermitian} if $A = A^\dagger$. Now take the $2
\times 2$ hermitian matrices:
\[ 
\h_2(\K) = \left\{ 
\left( 
\begin{array}{c c} 
	t + x & y     \\
	y^*   & t - x \\
\end{array}
\right) 
\; : \;
t, x \in \R, \; y \in \K
\right\}.
\]
This is an $(n + 2)$-dimensional real vector space. Moreover, the
usual formula for the determinant of a matrix gives the Minkowski norm
on this vector space:
\[ 
-\det 
\left( 
\begin{array}{c c} 
	t + x & y     \\
	y^*   & t - x \\
\end{array}
\right) 
= - t^2 + x^2 + |y|^2.
\]
We insert a minus sign to obtain the signature $(n+1, 1)$. Note this
formula is unambiguous even if $\K$ is noncommutative or nonassociative.   

It follows that the double cover of the Lorentz group, $\Spin(n + 1,
1)$, acts on $\h_2(\K)$ via determinant-preserving linear
transformations.  Since this is the `vector' representation, we will
often call $\h_2(\K)$ simply $V$.  The Minkowski metric
\[           g \maps V \otimes V \to \R  \]
is given by
\[           g(A,A) = -\det(A)  .\]
There is also a nice formula for the inner product of two different
vectors.  This involves the \define{trace reversal} of $A \in \h_2(\K)$,
defined by
\[ \tilde{A} = A - (\tr A) 1. \]
Note we indeed have $\tr(\tilde{A}) = -\tr(A)$.

\begin{prop}
\label{prop:metric}
For any vectors $A,B \in V = \h_2(K)$, we have
\[       A \tilde{A} = \tilde{A} A = - \det(A) 1 \]
and
\[  \frac{1}{2} \Retr(A \tilde{B}) = \frac{1}{2} \Retr(\tilde{A} B) = g(A,B)
\]
\end{prop}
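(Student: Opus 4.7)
The plan is to reduce both identities to explicit $2 \times 2$ matrix arithmetic, where the alternative structure of $\K$ takes care of the only potential subtlety.

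For the first identity, write
\[ A = \begin{pmatrix} t+x & y \\ y^* & t-x \end{pmatrix}, \qquad \tilde{A} = \begin{pmatrix} -t+x & y \\ y^* & -t-x \end{pmatrix}. \]
Multiplying directly, each diagonal entry of both $A\tilde{A}$ and $\tilde{A}A$ collapses to $-t^2 + x^2 + y y^* = -\det(A)$, while the off-diagonal entries vanish after commuting real scalars past $y$. No associativity concerns arise, since every entry of the product lies in the subalgebra of $\K$ generated by the single element $y$, which is commutative.

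For the second identity, I would polarize the first. Since $g(A,A) = -\det(A)$ is a quadratic form, replacing $A$ by $A+B$ in $A\tilde{A} = -\det(A)\,1$ and using the linearity of trace reversal yields the intermediate identity
\[ A\tilde{B} + B\tilde{A} \;=\; 2\, g(A,B)\, 1. \]
Taking the real trace gives $\Retr(A\tilde{B}) + \Retr(B\tilde{A}) = 4\, g(A,B)$. To split this evenly, I would invoke the elementary identity $\Re(ab) = \Re(ba)$ for $a,b \in \K$, which implies $\Retr(MN) = \Retr(NM)$ for any $2 \times 2$ matrices $M, N$ over $\K$. Combined with the expansion
\[ \Retr(A\tilde{B}) \;=\; \Retr(AB) - \tr(A)\tr(B) \;=\; \Retr(\tilde{A}B), \]
valid because hermitian matrices have $\Retr = \tr$, this forces $\Retr(A\tilde{B}) = \Retr(\tilde{A}B) = \Retr(B\tilde{A})$, so each equals $2\,g(A,B)$.

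I do not anticipate a serious obstacle. Every matrix product in sight has its entries built from at most two $\K$-valued ingredients (the off-diagonal entries $y$ of $A$ and $z$ of $B$, together with their conjugates), and by alternativity these generate an associative subalgebra of $\K$. The cyclicity of the real trace that underpins the last step was already packaged in Proposition~\ref{prop:realtrace}, leaving only elementary matrix bookkeeping.
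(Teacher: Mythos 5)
The paper states this proposition without proof, deferring implicitly to Section 3 of the cited prior paper, so there is no in-paper argument to compare against. Your proof is correct: the first identity is an honest entry-by-entry computation (and you are right that associativity never enters, since each entry of $A\tilde{A}$ is a sum of products of only \emph{two} elements of $\K$), and for the second you polarize to get $A\tilde{B} + B\tilde{A} = 2g(A,B)\,1$, take $\Retr$, and then combine cyclicity of $\Retr$ with the expansion $\Retr(A\tilde{B}) = \Retr(AB) - \tr(A)\tr(B) = \Retr(\tilde{A}B)$ (valid because $\tr(A)$ and $\tr(B)$ are real for hermitian matrices) to split the sum evenly. This is the natural route and I see no gaps.
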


Next we consider spinors.  As real vector spaces, the spinor
representations $S_+$ and $S_-$ are both just $\K^2$.
However, they differ as representations of $\Spin(n+1, 1)$. To
construct these representations, we begin by defining ways for vectors
to act on spinors:
\[ \begin{array}{cccl} 
	\gamma \maps & V \tensor S_+ & \to     & S_- \\
	             & A \tensor \psi     & \mapsto & A\psi.
   \end{array} \]
and 
\[ \begin{array}{cccl} 
	\tilde{\gamma} \maps & V \tensor S_- & \to     & S_+ \\
	                     & A \tensor \psi     & \mapsto & \tilde{A}\psi.
\end{array} \]
We can also think of these as maps that send elements of $V$ to linear 
operators:
\[ \begin{array}{cccl}
	\gamma \maps         & V & \to & \Hom(S_+, S_-), \\
	\tilde{\gamma} \maps & V & \to & \Hom(S_-, S_+).
\end{array} \]

Since vectors act on elements of $S_+$ to give elements of $S_-$
and vice versa, they map the space $S_+ \oplus S_-$ to itself.
This gives rise to an action of the Clifford algebra 
$\Cliff(V)$ on $S_+ \oplus S_-$:

\begin{prop} \label{prop:n2cliff}
The vectors $V = \h_2(\K)$ act on the spinors $S_+ \oplus S_- = \K^2
\oplus \K^2$ via the map
\[ \Gamma \maps  V \to      \End(S_+ \oplus S_-) \]
given by 
\[   \Gamma(A)(\psi, \,\phi) = (\widetilde{A} \phi, \, A \psi)  .\]
Furthermore, $\Gamma(A)$ satisfies the Clifford algebra relation:
\[ \Gamma(A)^2 = g(A,A) 1 \]
and so extends to a homomorphism $\Gamma \maps \Cliff(V) \to 
\End(S_+ \oplus S_-)$, i.e.\ a representation of the Clifford algebra 
$\Cliff(V)$ on $S_+ \oplus S_-$.
\end{prop}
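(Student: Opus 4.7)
The plan is to compute $\Gamma(A)^2$ directly and reduce to matrix identities already proved in Proposition \ref{prop:metric}. Applying $\Gamma(A)$ twice to $(\psi,\phi) \in S_+ \oplus S_-$ gives
\[ \Gamma(A)^2(\psi,\phi) = \Gamma(A)(\tilde A\phi,\,A\psi) = \bigl(\tilde A(A\psi),\;A(\tilde A\phi)\bigr), \]
so the Clifford relation $\Gamma(A)^2 = g(A,A)\,1$ is equivalent to the two spinor identities $\tilde A(A\psi) = g(A,A)\,\psi$ and $A(\tilde A\phi) = g(A,A)\,\phi$. The second follows from the first by swapping the roles of $A$ and $\tilde A$, using that $\widetilde{\tilde A} = A$ and $g(\tilde A,\tilde A) = g(A,A)$, so I focus on $\tilde A(A\psi) = g(A,A)\,\psi$.

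By Proposition \ref{prop:metric} we already have the matrix identity $\tilde A \cdot A = g(A,A)\,1$ in $\K[2]$. If $\K$ were associative, then $\tilde A(A\psi) = (\tilde A A)\psi = g(A,A)\,\psi$ would be immediate. The main obstacle is therefore nonassociativity in the octonionic case: for $\K = \O$ the action of $2 \times 2$ matrices on a column vector in $\K^2$ can in principle fail to be associative, so the matrix identity does not mechanically descend to column vectors.

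To handle this I expand $\tilde A(A\psi)$ componentwise. Writing
\[ A = \begin{pmatrix} t+x & y \\ y^* & t-x \end{pmatrix}, \qquad \psi = \begin{pmatrix} a \\ b \end{pmatrix}, \]
every summand that arises is either a product of real scalars (the entries $t \pm x$) acting on a single element of $\K$, which associates trivially, or one of the terms $y(y^*a)$ and $y^*(yb)$. Since $y$ and $y^*$ both lie in the associative subalgebra of $\K$ generated by $y$, alternativity gives $y(y^*a) = (yy^*)a = |y|^2 a$ and $y^*(yb) = |y|^2 b$. The cross terms linear in $y$ alone cancel because their real coefficients $-t+x$ and $t-x$ sum to zero, and what remains is $(-t^2 + x^2 + |y|^2)\psi = g(A,A)\,\psi$. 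Once $\Gamma(A)^2 = g(A,A)\,1$ is established, the universal property of the Clifford algebra extends $\Gamma$ uniquely to a homomorphism $\Gamma\maps \Cliff(V) \to \End(S_+\oplus S_-)$.
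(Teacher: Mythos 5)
Your proof is correct and matches the paper's approach: the paper defers this particular proof to \cite{BaezHuerta} but proves the analogous $(n+3)$-dimensional statement by the same method, namely computing $\Gamma(A)^2(\psi,\phi)$ componentwise and observing that the only potentially nonassociative products, such as $y(y^*a)$, involve at most two independent non-real elements of $\K$ and hence associate by alternativity. Your version merely spells out the cancellations and the reduction to $\tilde A A = A\tilde A = g(A,A)\,1$ more explicitly.
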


As explained in our previous paper \cite{BaezHuerta}, the spaces $S_+$, $S_-$
and $V$ are representations of the spin group $\Spin(n+1,1)$. Moreover:

\begin{prop} 
The maps 
\[ \begin{array}{cccl}
	\gamma \maps & V \tensor S_+ & \to     & S_- \\
                     & A \tensor \psi     & \mapsto & A \psi
\end{array} \]
and
\[ \begin{array}{cccl}
	\tilde{\gamma} \maps & V \tensor S_- & \to     & S_+ \\
	                     & A \tensor \psi    & \mapsto & \tilde{A} \psi
\end{array} \]
are equivariant with respect to the action of $\Spin(n+1, 1)$.
\end{prop}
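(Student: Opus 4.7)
The plan is to identify $\gamma$ and $\tilde{\gamma}$ as the two off-diagonal blocks of the Clifford representation $\Gamma$ from Proposition \ref{prop:n2cliff}, and then read equivariance off from the standard $\Spin(n+1,1)$-equivariance of Clifford multiplication. Unpacking the formula $\Gamma(A)(\psi,\phi) = (\tilde{A}\phi, A\psi)$, the restriction of $\Gamma(A)$ to the $S_+$ summand is exactly $\gamma(A,-) \maps S_+ \to S_-$, and its restriction to $S_-$ is exactly $\tilde{\gamma}(A,-) \maps S_- \to S_+$. So it suffices to prove that $\Gamma \maps V \otimes (S_+ \oplus S_-) \to S_+ \oplus S_-$ is $\Spin(n+1,1)$-equivariant and that the $\Spin$-action preserves the $\Z/2$-grading on $S_+ \oplus S_-$; projecting onto the two summands then yields equivariance for $\gamma$ and $\tilde{\gamma}$ separately.

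For the equivariance of $\Gamma$, I would invoke the standard embedding $\Spin(n+1,1) \hookrightarrow \Cliff_{\ev}(V)$, under which $\Spin(n+1,1)$ acts on $V \subset \Cliff(V)$ by conjugation $A \mapsto gAg^{-1}$ --- this recovers the vector representation on $\h_2(\K)$, as established in \cite{BaezHuerta} --- and acts on $S_+ \oplus S_-$ by restricting $\Gamma$. The $\Z/2$-grading is preserved because each $\Gamma(v)$ with $v \in V$ swaps $S_+$ and $S_-$, while every element of $\Spin$ is a product of an even number of such vectors. From the basic Clifford identity
\[ \Gamma(gAg^{-1}) \circ \Gamma(g) = \Gamma(g) \circ \Gamma(A), \]
restricted to the $S_+$ summand one reads off $(gAg^{-1})(g\psi) = g(A\psi)$, which is the equivariance of $\gamma$; restricted to the $S_-$ summand one reads off $\widetilde{gAg^{-1}}(g\phi) = g(\tilde{A}\phi)$, which is the equivariance of $\tilde{\gamma}$. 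Notice that the slightly mysterious-looking identity $\widetilde{gAg^{-1}} = g\,\tilde{A}\,g^{-1}$ (as an operator on $S_-$) never has to be verified by hand: it is forced on us by the Clifford relation.

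The only thing to double-check is that the $\Spin$-actions on $V$ and $S_\pm$ appearing in the statement of the proposition genuinely coincide with the ones pulled back from $\Cliff_{\ev}(V)$ via this embedding. Both compatibilities are already part of the construction of the spinor representations in \cite{BaezHuerta} and may be cited as a black box. I therefore expect no substantive obstacle: the proposition is in essence a bimodule-style repackaging of the well-known $\Spin$-equivariance of Clifford multiplication, and the proof is a matter of unpacking.
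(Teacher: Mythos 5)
Your proof is correct and is essentially the intended one: the paper states this proposition without a local proof, deferring to \cite{BaezHuerta}, but the $\Spin(n+1,1)$-action on $V$ and $S_\pm$ referenced there is precisely the one induced by the Clifford representation $\Gamma$ of Proposition \ref{prop:n2cliff}, so equivariance of $\gamma$ and $\tilde\gamma$ falls out of the homomorphism identity $\Gamma(gAg^{-1})\Gamma(g)=\Gamma(g)\Gamma(A)$ restricted to the two summands, exactly as you argue. Your observation that the off-diagonal block structure of $\Gamma$ makes the apparent trace-reversal identity for $\tilde\gamma$ come for free is a nice way to say why no separate computation with $\tilde A$ is needed.
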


\begin{prop}
The pairing
\[ \begin{array}{cccl}
\langle -, - \rangle \maps & S_+ \tensor S_- & \to     & \R \\
 & \psi \tensor \phi & \mapsto & \Re(\psi^\dagger \phi)
\end{array} \]
is invariant under the action of $\Spin(n+1,1)$.
\end{prop}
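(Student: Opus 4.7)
The plan is to reduce $\Spin$-invariance to invariance under the Lie algebra $\so(n+1,1)$, and then verify the latter using the real-trace identity of Proposition~\ref{prop:realtrace}. Every element of $\so(n+1,1) \subset \Cliff(V)$ can be written as a bivector $\frac{1}{2}(AB - BA)$ with $A, B \in V$, and Proposition~\ref{prop:n2cliff} tells us exactly how such a bivector acts on $S_\pm$. So infinitesimal invariance of $\langle -,-\rangle$ reduces to the single equation
\[ \Re\bigl((\tilde A(B\psi) - \tilde B(A\psi))^\dagger \phi\bigr) + \Re\bigl(\psi^\dagger(A(\tilde B\phi) - B(\tilde A\phi))\bigr) = 0 \]
for all $A, B \in V$ and $\psi \in S_+$, $\phi \in S_-$.

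The crucial intermediate step is an auxiliary identity: for any $C \in V$ and any $\psi, \phi \in \K^2$,
\[ \Re\bigl((C\psi)^\dagger \phi\bigr) = \Re\bigl(\psi^\dagger (C\phi)\bigr). \]
Here is where the real trace does its work. Regarding $\psi^\dagger$ as a $1\times 2$ row, $C$ as a $2\times 2$ hermitian matrix, and $\phi$ as a $2\times 1$ column, both sides equal $\Retr(\psi^\dagger C \phi)$: the key inputs are that $(C\psi)^\dagger = \psi^\dagger C^\dagger = \psi^\dagger C$ entrywise (no associativity needed to check this), and then Proposition~\ref{prop:realtrace} guarantees that the real trace is unchanged when one reassociates the three factors. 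The identity thus holds uniformly in all four normed division algebras.

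With this identity in hand, the main equation follows from applying it twice. For instance $\Re((\tilde A(B\psi))^\dagger \phi) = \Re((B\psi)^\dagger(\tilde A \phi)) = \Re(\psi^\dagger B(\tilde A \phi))$, with an analogous computation for the $A \leftrightarrow B$ swapped term. The resulting expression for the $S_+$-variation is the exact negative of the $S_-$-variation, so the two cancel. Passing from the Lie algebra to the identity component of $\Spin(n+1,1)$ is routine, and invariance on other components can be read off by noting that $\Spin(n+1,1)$ is generated inside $\Cliff(V)^\times$ by products of unit vectors, for which the same identity handles each Clifford reflection.

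The only genuine obstacle is the nonassociativity of $\O$: expressions like $\tilde A(B\psi)$ are not in general equal to $(\tilde A B)\psi$, so the reassociations above are illegal at the level of elements of $\K$. The entire reason for funneling the argument through $\Retr$ is to invoke Proposition~\ref{prop:realtrace}, which salvages just enough associativity---after taking real parts---to let the manipulation go through uniformly for $\R$, $\C$, $\H$, and $\O$.
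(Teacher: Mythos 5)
Your proof is correct. The paper does not include a proof of this proposition here (the facts in this section are deferred to the authors' previous paper), but the closest in-text analogue is the later proof that the form $\langle\Psi,\Phi\rangle = \Re(\Psi^\dagger\Gamma^0\Phi)$ on $\S = S_+\oplus S_-$ is $\Spin(\V)$-invariant, which is checked there by computing the action of a \emph{single} unit vector. That strategy is not available here: a single $C\in V$ swaps $S_+$ and $S_-$, so a single reflection does not act on the domain $S_+\tensor S_-$ of this pairing, and one is forced to work with pairs of vectors or, as you do, with bivectors in $\so(n+1,1)$. Your auxiliary identity $\Re((C\psi)^\dagger\phi)=\Re(\psi^\dagger(C\phi))$ for hermitian $C$---established via $(C\psi)^\dagger=\psi^\dagger C$ entrywise together with the $\Retr$ reassociation of Proposition~\ref{prop:realtrace}---is exactly the lever that makes the computation uniform across $\R,\C,\H,\O$, and your cancellation of the $S_+$- and $S_-$-variations checks out.

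One small imprecision in your final sentence: a product $AB$ of two unit vectors rescales the pairing by $g(A,A)g(B,B)$, which equals $-1$ when one vector is timelike and the other spacelike; such products lie in (and generate) the non-identity component of $\Spin(n+1,1)$. So strict invariance holds only on the identity component, and the claim that ``the same identity handles each Clifford reflection'' on other components is overstated. The paper's own argument for $\Spin(\V)$-invariance of the pairing on $\S$ has the same looseness (``changes the sign at most''), and the identity component is what the applications require, so this does not undermine the result---but the Lie algebra computation is the load-bearing part, and it is sound.
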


With this pairing in hand, there is a manifestly equivariant way to
turn a pair of spinors into a vector.  Given 
$\psi, \phi \in S_+$, there is a unique vector $\psi \cdot \phi$ 
whose inner product with any vector $A$ is given by
\[ g(\psi \cdot \phi, A) = \langle \psi, \gamma(A) \phi \rangle .\]
Similarly, given $\psi, \phi \in S_-$, we define 
$\psi \cdot \phi \in V$ by demanding
\[ g(\psi \cdot \phi, A) = \langle \tilde{\gamma}(A) \psi, \phi \rangle \]
for all $A \in V$.  This gives us maps
\[ S_\pm \tensor S_\pm \to V \]
which are manifestly equivariant. In fact:

\begin{prop}
The maps $\cdot \, \maps S_\pm \tensor S_\pm \to V$ are given by:
\[ \begin{array}{cccl}
	\cdot \, \maps & S_+ \tensor S_+ & \to     & V \\
 & \psi \tensor \phi   & \mapsto & 
\widetilde{\psi \phi^\dagger + \phi \psi^\dagger}
\end{array} \]
\[ \begin{array}{cccl}
	\cdot \, \maps & S_- \tensor S_- & \to     & V \\
& \psi \tensor \phi  & \mapsto & 
\psi \phi^\dagger + \phi \psi^\dagger.
\end{array} \]
These maps are equivariant with respect to the action of $\Spin(n+1, 1)$.
\end{prop}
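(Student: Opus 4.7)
The plan is to verify the explicit formulas by substituting them into the defining equations, and then to observe that equivariance is automatic from how the maps $\cdot$ were set up. For $\psi,\phi \in S_+$, the vector $\psi \cdot \phi$ is \emph{defined} as the unique element of $V$ satisfying $g(\psi \cdot \phi, A) = \langle \psi, \gamma(A)\phi\rangle$ for every $A \in V$, and analogously for $S_-$. Because $g$, the spinor pairing, and the Clifford actions $\gamma, \tilde\gamma$ are all $\Spin(n+1,1)$-equivariant, uniqueness forces $\cdot$ to be equivariant. So equivariance follows immediately once the formulas are checked.

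I would begin with the $S_-$ case. Set $X = \psi\phi^\dagger + \phi\psi^\dagger$; this is manifestly hermitian, hence lies in $V = \h_2(\K)$. Proposition~\ref{prop:metric} combined with cyclicity of the real trace (Proposition~\ref{prop:realtrace}) gives
\[ g(X,A) = \tfrac{1}{2}\Retr(X \tilde{A}) = \tfrac{1}{2}\Re(\phi^\dagger \tilde{A}\psi) + \tfrac{1}{2}\Re(\psi^\dagger \tilde{A}\phi), \]
where each summand collapses to the real part of the scalar $1\times 1$ product obtained by cycling the column vector to the end. The target value is $\langle \tilde{\gamma}(A)\psi, \phi\rangle = \Re(\psi^\dagger \tilde{A}\phi)$, so the verification reduces to the symmetry $\Re(\phi^\dagger \tilde{A}\psi) = \Re(\psi^\dagger \tilde{A}\phi)$, equivalently $\Retr(\tilde{A}\, M) = 0$ for the antihermitian matrix $M = \psi\phi^\dagger - \phi\psi^\dagger$.

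This last claim is a general fact: $\Retr(HM) = 0$ whenever $H$ is hermitian and $M$ is antihermitian. I would prove it abstractly from the entrywise identities $(HM)^\dagger = M^\dagger H^\dagger$ and $\Retr(N^\dagger) = \Retr(N)$, together with cyclicity:
\[ \Retr(HM) = \Retr((HM)^\dagger) = \Retr(M^\dagger H^\dagger) = -\Retr(MH) = -\Retr(HM), \]
whence $\Retr(HM) = 0$. This completes the $S_-$ formula.

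For the $S_+$ case, the candidate is $\widetilde{Y}$ with $Y = \psi\phi^\dagger + \phi\psi^\dagger$. Using the alternate form $g(Z,A) = \tfrac{1}{2}\Retr(\tilde{Z} A)$ of Proposition~\ref{prop:metric} together with $\widetilde{\widetilde{Y}} = Y$ gives $g(\widetilde{Y}, A) = \tfrac{1}{2}\Retr(Y A)$, and the identical argument with $A$ in place of $\tilde{A}$ produces $\Re(\psi^\dagger A\phi) = \langle \psi, \gamma(A)\phi\rangle$, as required. The step needing the most care is the symmetry in the octonionic case, where individual triple products do not associate; the proof sidesteps this by working exclusively with the real trace, whose cyclicity (Proposition~\ref{prop:realtrace}) stands in for associativity throughout.
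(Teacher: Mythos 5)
The paper states this proposition without proof, deferring (as it does for most of Section~3) to the detailed treatment in the authors' previous paper. Your proof is correct and is the natural verification one would give: you check each claimed formula against the defining property $g(\psi\cdot\phi,A)=\langle\psi,\gamma(A)\phi\rangle$ (respectively its $S_-$ counterpart), reducing the computation to the cyclicity of the real trace (Proposition~\ref{prop:realtrace}) and the fact that $\Retr(HM)=0$ when $H$ is hermitian and $M$ antihermitian, and you correctly deduce equivariance from uniqueness together with the equivariance of $g$, $\langle-,-\rangle$, $\gamma$, and $\tilde\gamma$. One small streamlining: since $(\psi\phi^\dagger\tilde A)^\dagger=\tilde A\,\phi\psi^\dagger$ and $\Retr(N)=\Retr(N^\dagger)$, the two summands $\Retr(\psi\phi^\dagger\tilde A)$ and $\Retr(\phi\psi^\dagger\tilde A)$ are equal outright, so the factor of $\tfrac12$ vanishes immediately without invoking the hermitian/antihermitian lemma as a separate step—but this is only a matter of taste, and your version is equally valid.
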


Theorem 10 of our previous paper stated the fundamental identity which allows
supersymmetry in dimensions 3, 4, 6 and 10: the `3-$\psi$'s rule'. 
Our proof was based on an argument in the appendix of 
a paper by Dray, Janesky and Manogue \cite{DrayJaneskyManogue}:

\begin{thm}  
\label{thm:fundamental_identity}
Suppose $\psi \in S_+$.  Then $(\psi \cdot \psi) \psi = 0$. Similarly, if 
$\phi \in S_-$, then $(\widetilde{\phi \cdot \phi}) \phi = 0$.
\end{thm}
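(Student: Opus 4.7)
The plan is to reduce both halves of the theorem to a single computation in $\K^2$, and then exploit alternativity to finish.

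First I would handle the $S_+$ case. By the explicit formula from the preceding proposition, $\psi \cdot \psi = \widetilde{\psi \psi^\dagger + \psi \psi^\dagger} = 2\,\widetilde{\psi \psi^\dagger}$, and the action on the spinor is just matrix multiplication, so
\[
(\psi \cdot \psi)\psi \;=\; 2\,\widetilde{\psi \psi^\dagger}\,\psi
\;=\; 2\bigl((\psi \psi^\dagger)\psi \;-\; \tr(\psi \psi^\dagger)\,\psi\bigr),
\]
using the definition of trace reversal $\widetilde{A} = A - (\tr A)1$. Writing $\psi = (a,b)^T \in \K^2$, one checks immediately that $\tr(\psi\psi^\dagger) = |a|^2 + |b|^2 = \psi^\dagger \psi \in \R$. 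So the theorem in the $S_+$ case is equivalent to the assertion
\[
(\psi \psi^\dagger)\,\psi \;=\; (\psi^\dagger \psi)\,\psi
\]
in $\K^2$. Writing this out componentwise, it reduces to checking $(a b^*)b = a(b^* b)$ and $(b a^*)a = b(a^* a)$.

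This is where the main subtlety lies: for $\K = \O$ we cannot freely reassociate. The hard part is therefore verifying these two identities without associativity. For this I would invoke Artin's theorem: in any alternative algebra the subalgebra generated by two elements is associative. Since $b^* = 2\Re(b) - b$ lies in the (associative) subalgebra generated by $a$ and $b$, the product $(a b^*) b$ equals $a(b^* b) = a|b|^2 = |b|^2 a$, and similarly for the other component. (Alternatively, one could just quote the right alternative law $(xy)y = x(y^2)$ after substituting $y = b^*$ and using $b^*b = |b|^2$.) Combining with the previous display gives $(\psi\cdot\psi)\psi = 2(|\psi|^2 \psi - |\psi|^2 \psi) = 0$.

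For the $S_-$ case I would repeat the argument with the sign conventions adjusted: now $\phi \cdot \phi = 2\phi\phi^\dagger$ with no trace reversal, so
\[
\widetilde{\phi \cdot \phi}\,\phi \;=\; 2\bigl((\phi\phi^\dagger)\phi - (\phi^\dagger\phi)\,\phi\bigr),
\]
and exactly the same alternativity computation in $\K^2$ shows this vanishes. The only real content of the proof is therefore the alternativity step; everything else is bookkeeping with the trace-reversal identity $\widetilde{A} = A - (\tr A)1$ and the explicit formulas for the bracket of spinors from the preceding proposition.
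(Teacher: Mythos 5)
Your argument is correct and is essentially the same reduction that the paper defers to: in this paper the theorem is merely quoted, with the proof relegated to Theorem 10 of the companion paper (Division Algebras and Supersymmetry I), which likewise expands $\psi \cdot \psi = 2\widetilde{\psi\psi^\dagger}$, uses $\widetilde{A} = A - (\tr A)1$, and reduces the claim to componentwise octonion identities settled by alternativity. One small caveat: your parenthetical fallback, ``quote the right alternative law $(xy)y = x(y^2)$ after substituting $y = b^*$,'' does not give what you need---that substitution yields $(ab^*)b^* = a(b^{*2})$, not $(ab^*)b = a(b^*b)$. Your primary route via Artin's theorem is fine; the elementary repair of the parenthetical is to use linearity of the associator together with $b^* = 2\Re(b)\cdot 1 - b$, so that $[a,b^*,b] = 2\Re(b)\,[a,1,b] - [a,b,b] = 0$, the first term vanishing trivially and the second by right alternativity.
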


%

\section{Spacetime Geometry in $n+3$ Dimensions} \label{sec:n+3}

In the last section we recalled how to describe spinors and vectors in
$(n+2)$-dimensional Minkowski spacetime using a division algebra $\K$
of dimension $n$.  Here we show how to boost this up one dimension,
and give a division algebra description of vectors and spinors in
$(n+3)$-dimensional Minkowski spacetime.

We shall see that vectors in $(n+3)$-dimensional Minkowski spacetime can
be identified with $4 \times 4$ $\K$-valued matrices of 
this particular form:
\[ \left( 
\begin{array}{cc} 
	a & \tilde{A} \\
	A & -a
\end{array}
\right)
\]
where $a$ is a real multiple of the $2 \times 2$ identity matrix
and $A$ is a $2 \times 2$ hermitian matrix with entries in $\K$.
Moreover, $\Spin(n+2, 1)$ has a representation on $\K^4$,
which we call $\S$.  Depending on $\K$, this gives the following
types of spinors: 
\begin{itemize}
    \item When $\K = \R$, $\S$ is the Majorana spinor representation of
	    $\Spin(3,1)$.
    \item When $\K = \C$, $\S$ is the Dirac spinor representation of
	    $\Spin(4,1)$. 
    \item When $\K = \H$, $\S$ is the Dirac spinor representation of
	    $\Spin(6,1)$.
    \item When $\K = \O$, $\S$ is the Majorana spinor representation of
	    $\Spin(10,1)$.
\end{itemize} 
Again, these spinor representations are also
representations of the even part of the relevant Clifford algebra:
\vskip 1em
\begin{center}
\renewcommand{\arraystretch}{1.4}
\begin{tabular}{|lcl|}
	\hline
	\multicolumn{3}{|c|}{\textbf{Even parts of Clifford algebras}} \\
	\hline
	$\Cliff_{\ev}(3,1)$  & $\iso$ & $\C[2]$                 \\
	$\Cliff_{\ev}(4,1)$  & $\iso$ & $\H[2]$                 \\
	$\Cliff_{\ev}(6,1)$  & $\iso$ & $\H[4]$    \\
	$\Cliff_{\ev}(10,1)$ & $\iso$ & $\R[32]$ \\
	\hline
\end{tabular}
\renewcommand{\arraystretch}{1}
\end{center}
These algebras have irreducible representations on $\R^4 \iso \C^2$, 
$\C^4 \iso \H^2$, $\H^4$ and $\O^4 \iso \R^{32}$, respectively.  

The details can be described in a uniform way for all four cases.  We
take as our space of `vectors' the following $(n+3)$-dimensional
subspace of $\K[4]$:
\[ \V = 
\left\{ 
\left( 
\begin{array}{cc} 
	a & \tilde{A} \\
	A & -a
\end{array}
\right)
: a \in \R, \quad A \in \h_2(\K)
\right\}
\]
In the last section, we defined vectors in $n+2$ dimensions to be $V =
\h_2(\K)$. That space has an obvious embedding into $\V$, given by
\[ \begin{array}{rcl}
	V & \inclusion & \V \\
	A & \mapsto    & \left( \begin{matrix} 0 & \tilde{A} \\ A & 0 \end{matrix} \right) 
	\end{array} \]
The Minkowski metric 
\[ h \maps \V \tensor \V \to \R \]
is given by extending the Minkowski metric $g$ on $V$:
\[ h \left( \left( \begin{smallmatrix} a & \tilde{A} \\ A & -a \end{smallmatrix} \right), \left( \begin{smallmatrix} a & \tilde{A} \\ A & -a \end{smallmatrix} \right) \right) = g(A,A) + a^2 \]
From our formulas for $g$, we can derive formulas for $h$:

\begin{prop} \label{prop:metric2}
	For any vectors $\A,\B \in \V \subseteq \K[4]$, we have
	\[ \A^2 = h(\A, \A)1 \]
	and
	\[ \fourth \Retr(\A \B) = h(\A, \B). \]
\end{prop}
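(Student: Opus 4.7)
The plan is to verify the Clifford relation $\A^2 = h(\A,\A)\mathbb{1}_4$ by direct block-matrix computation, then obtain the trace formula by polarization.

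First I would write an arbitrary $\A \in \V$ in the block form
\[
\A = \begin{pmatrix} a & \tilde{A} \\ A & -a \end{pmatrix}, \qquad a \in \R,\ A \in \h_2(\K),
\]
and compute $\A^2$ by block multiplication. Since $a$ is a real multiple of the $2\times 2$ identity, it commutes with every matrix over $\K$, so the off-diagonal blocks collapse: $a\tilde{A} - \tilde{A}a = 0$ and $Aa - aA = 0$. The diagonal blocks become $a^2 \mathbb{1}_2 + \tilde{A}A$ and $A\tilde{A} + a^2 \mathbb{1}_2$. By Proposition~\ref{prop:metric}, both $\tilde{A}A$ and $A\tilde{A}$ equal $-\det(A)\mathbb{1}_2 = g(A,A)\mathbb{1}_2$. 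Hence
\[
\A^2 = (a^2 + g(A,A))\,\mathbb{1}_4 = h(\A,\A)\,\mathbb{1}_4,
\]
by the defining formula for $h$. Note that no associativity issue arises, since we only ever multiply two matrices together.

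Next, to extract the second identity, I polarize. Applying the first identity to $\A + \B$ and subtracting $\A^2$ and $\B^2$ gives
\[
\A\B + \B\A = 2\,h(\A,\B)\,\mathbb{1}_4.
\]
Now take $\Retr$ of both sides. Since $\Re(ab) = \Re(ba)$ in any normed division algebra, we have $\Retr(\A\B) = \Retr(\B\A)$ (a two-matrix special case of Proposition~\ref{prop:realtrace}), so the left side gives $2\Retr(\A\B)$. The right side gives $2h(\A,\B)\,\tr(\mathbb{1}_4) = 8\,h(\A,\B)$. Dividing by $8$ yields $\tfrac14\Retr(\A\B) = h(\A,\B)$.

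The only subtle point, and therefore the main thing to be careful about, is the handling of nonassociativity when $\K = \O$: one must confirm that each expression appearing in the block computation is a product of just \emph{two} $\K$-matrices (so block multiplication is unambiguous), and that the key simplifications $\tilde{A}A = A\tilde{A} = g(A,A)\mathbb{1}_2$ are pulled directly from Proposition~\ref{prop:metric} rather than derived afresh inside a longer octonionic product. Once that is observed, the proof is purely mechanical.
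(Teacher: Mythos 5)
Your proof is correct and follows exactly the same route as the paper's: compute $\A^2$ by block multiplication, invoke $A\tilde A = \tilde A A = g(A,A)1$ from Proposition~\ref{prop:metric} to get the Clifford relation, then polarize and take $\Retr$ of both sides. The paper states the second step more tersely (``follows from polarizing and taking the real trace''), and you have simply spelled out those details, including the use of the cyclic property of $\Retr$ to identify $\Retr(\A\B)$ with $\Retr(\B\A)$.
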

\begin{proof}
	For $\A = \left( \begin{smallmatrix} a & \tilde{A} \\ A & -a
	\end{smallmatrix} \right)$, it is easy to check:
	\[ \A^2 = \left( \begin{matrix} a^2 + \tilde{A}A & 0 \\ 0 & A\tilde{A} + a^2 \end{matrix} \right) \]
	By Proposition \ref{prop:metric}, we have $A \tilde{A} = \tilde{A} A =
	g(A,A)1$, and substituting this in establishes the first formula. The
	second formula follows from polarizing and taking the real trace of
	both sides.
\end{proof}

Define a space of `spinors' by $\S = S_+ \oplus S_- = \K^4$. To distinguish 
elements of $\V$ from elements of $\h_2(\K)$, we will denote them with
caligraphic letters like $\A$, $\B$, \dots.  Similarly, to distinguish
elements of $\S$ from $S_\pm$, we will denote them with capital Greek letters
like $\Psi$, $\Phi$, \dots.

Elements of $\V$ act on $\S$ by left multiplication:
\[ \begin{array}{rcl}
	\V \tensor \S & \to     & \S \\
	\A \tensor \Psi        & \mapsto & \A \Psi
\end{array} \]
We can dualize this to get a map:
\[ \begin{array}{cccl}
	\Gamma \maps & \V & \to     & \End(\S) \\
	             & \A & \mapsto & L_\A 
\end{array} \]
This induces the Clifford action of $\Cliff(\V)$ on $\S$.
Note that this $\Gamma$ is the same as the map in Proposition
\ref{prop:n2cliff} when we restrict to $V \subseteq \V$.

\begin{prop}
The vectors $\V \subseteq \K[4]$ act on the spinors $\S = \K^4$ via the map
\[ \Gamma \maps \V \to \End(\S) \]
given by 
\[   \Gamma(\A)\Psi = \A \Psi \]
Furthermore, $\Gamma(\A)$ satisfies the Clifford algebra relation:
\[ \Gamma(\A)^2 = h(\A,\A) 1 \]
and so extends to a homomorphism $\Gamma \maps \Cliff(\V) \to \End(\S)$, i.e.\ a
representation of the Clifford algebra $\Cliff(\V)$ on $\S$.
\end{prop}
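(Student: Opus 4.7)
The plan is to reduce the claim to what has already been established in the $n+2$ case, namely Proposition \ref{prop:n2cliff}, so that the non-associativity of $\O$ causes no trouble. First, I would observe that $\Gamma(\A) = L_\A$ is obviously linear in $\A$, so the main content is the Clifford relation $\Gamma(\A)^2 = h(\A,\A)\,1$. Note that we cannot simply invoke Proposition \ref{prop:metric2} to say $\A^2 = h(\A,\A)\,1$ and conclude, because $\Gamma(\A)^2\Psi = \A(\A\Psi)$, and without associativity this need not equal $(\A\A)\Psi$. So the real work is to prove $\A(\A\Psi) = h(\A,\A)\Psi$ directly.

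To do this, I would write everything in $2\times 2$ block form. Put $\A = \left(\begin{smallmatrix} a & \tilde A \\ A & -a \end{smallmatrix}\right)$ with $a \in \R$, $A \in \h_2(\K)$, and $\Psi = \left(\begin{smallmatrix} \psi \\ \phi \end{smallmatrix}\right)$ with $\psi, \phi \in \K^2$. A direct calculation gives
\[ \A\Psi = \begin{pmatrix} a\psi + \tilde A\phi \\ A\psi - a\phi \end{pmatrix}, \]
and then, applying $\A$ once more and using only that $a \in \R$ commutes and associates freely with everything,
\[ \A(\A\Psi) = \begin{pmatrix} a^2\psi + \tilde A(A\psi) \\ A(\tilde A \phi) + a^2\phi \end{pmatrix}. \]
The terms $a\tilde A\phi$ and $aA\psi$ cancel in pairs. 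At this point the only remaining non-trivial quantities are $\tilde A(A\psi)$ and $A(\tilde A \phi)$.

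Here is where I would invoke Proposition \ref{prop:n2cliff}: that proposition asserts precisely that the operator sending $(\psi,\phi) \mapsto (\tilde A\phi, A\psi)$ squares to $g(A,A)\,1$ on $S_+ \oplus S_-$. Squaring that operator yields $(\tilde A(A\psi),\, A(\tilde A\phi))$, so we conclude $\tilde A(A\psi) = g(A,A)\psi$ and $A(\tilde A\phi) = g(A,A)\phi$. Substituting this back gives $\A(\A\Psi) = (a^2 + g(A,A))\Psi = h(\A,\A)\Psi$, as desired. This is the one step where non-associativity of $\O$ genuinely matters, and it is handled by the previous paper's argument packaged inside Proposition \ref{prop:n2cliff}.

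Finally, the extension to a homomorphism $\Gamma \maps \Cliff(\V) \to \End(\S)$ is automatic from the universal property of the Clifford algebra, since polarizing $\Gamma(\A)^2 = h(\A,\A)\,1$ gives $\Gamma(\A)\Gamma(\B) + \Gamma(\B)\Gamma(\A) = 2h(\A,\B)\,1$. The main obstacle, as indicated, is overcoming the temptation to use associativity of octonionic matrix multiplication; the fix is to localize the calculation to the $2 \times 2$ situation already addressed in the previous section.
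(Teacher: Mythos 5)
Your proof is correct and the block computation coincides with the paper's, but you resolve the one delicate step---evaluating $\tilde{A}(A\psi)$ and $A(\tilde{A}\phi)$ in the presence of nonassociativity---by a different key lemma. The paper argues inline that, after cancellations, these expressions involve at most two nonreal elements of $\K$, so alternativity lets one regroup to $(\tilde{A}A)\psi$ and $(A\tilde{A})\phi$; it then concludes $\Gamma(\A)^2\Psi = \A^2\Psi$ and invokes Proposition~\ref{prop:metric2}. You instead invoke Proposition~\ref{prop:n2cliff} directly, noting that squaring the operator $(\psi,\phi)\mapsto(\tilde{A}\phi,A\psi)$ produces precisely $(\tilde{A}(A\psi),\,A(\tilde{A}\phi)) = g(A,A)(\psi,\phi)$, so the associativity issue has already been absorbed into a proved result rather than re-argued from scratch. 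Both routes are valid: yours is more economical and better localizes the nonassociativity concern, while the paper's is slightly more self-contained within the proof but requires re-verifying a small Artin-theorem argument. You are also right to caution against jumping straight from $\A^2 = h(\A,\A)1$ to the conclusion, since $\A(\A\Psi)$ need not equal $(\A\A)\Psi$ a priori---the paper makes the same caveat and both proofs address it.
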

\begin{proof}
Here, we must be mindful of nonassociativity. For $\Psi = (\psi, \phi) \in
\S$ and $\A = \left( \begin{smallmatrix} a & \tilde{A} \\ A &
-a \end{smallmatrix} \right) \in \V$, we have:
\[ \Gamma(\A)^2 \Psi = \A ( \A \Psi) \]
which works out to be:
\[ \Gamma(\A)^2 \Psi = \left( \begin{array}{c} a^2 \psi + \tilde{A}(A\psi) \\ A(\tilde{A}\phi) + a^2 \phi \end{array} \right). \]
A quick calculation shows that the expressions $\tilde{A}(A\psi)$ and
$A(\tilde{A} \phi)$ involve at most two nonreal elements of $\K$, so everything
associates and we can write: 
\[ \Gamma(\A)^2 \Psi = \A^2 \Psi \]
By Proposition \ref{prop:metric2}, we are done.
\end{proof}

This tells us how $\S$ is a module of $\Cliff(\V)$, and thus
a representation of $\Spin(\V)$, the subgroup of $\Cliff(\V)$
generated by products of pairs of unit vectors. 

In the last section, we saw how to construct a $\Spin(V)$-invariant pairing
\[ \langle -,- \rangle \maps S_+ \tensor S_- \to \R. \]
We can use this to build up to a $\Spin(\V)$-invariant pairing on
$\S$:
\[ \langle (\psi, \phi), (\chi, \theta) \rangle = \langle \chi, \phi \rangle - \langle \psi, \theta \rangle \]
To see this, let
\[ \Gamma^0 = \left( \begin{matrix} 0 & -1 \\ 1 & 0 \end{matrix} \right) \]
Then, because $\langle \psi, \phi \rangle = \Re(\psi^\dagger \phi)$, it is easy
to check that:
\[ \langle \chi, \phi \rangle - \langle \psi, \theta \rangle = \Re \left( \left( \begin{matrix} \psi \\ \phi \end{matrix} \right)^\dagger \Gamma^0 \left( \begin{matrix} \chi \\ \theta \end{matrix} \right) \right). \]
We can show this last expression is invariant by explicit calculation.
\begin{prop}
	Define the nondegenerate skew-symmetric bilinear form
	\[ \langle -,- \rangle \maps \S \tensor \S \to \R \]
	by
	\[ \langle \Psi, \Phi \rangle = \Re(\Psi^\dagger \Gamma^0 \Phi). \]
	This form is invariant under $\Spin(\V)$.
\end{prop}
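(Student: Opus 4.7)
My approach is to reduce $\Spin(\V)$-invariance to an infinitesimal statement. The Lie algebra $\mathfrak{spin}(\V)$ sits inside $\Cliff(\V)$ as the span of the bivectors $[\Gamma(\A), \Gamma(\B)]$ for $\A, \B \in \V$, and $\Spin(\V)$ is generated by the exponentials of these. It therefore suffices to show that every bivector $[\Gamma(\A), \Gamma(\B)]$ acts by a skew-adjoint operator with respect to $\langle -, -\rangle$. Since the commutator of two skew-adjoint operators is again skew-adjoint, this reduces to the single-vector statement: for every $\A \in \V$,
\[ \langle \Gamma(\A)\Psi, \Phi\rangle + \langle \Psi, \Gamma(\A)\Phi\rangle = 0. \]

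To prove this skew-adjointness, I will write $\A = \left(\begin{smallmatrix} a & \tilde{A} \\ A & -a \end{smallmatrix}\right)$, $\Psi = (\psi, \phi)$, $\Phi = (\chi, \theta)$, and expand both pairings using the definition $\langle (\psi, \phi), (\chi, \theta)\rangle = \langle \psi, \theta\rangle - \langle \chi, \phi\rangle$ together with the action $\Gamma(\A)\Psi = (a\psi + \tilde{A}\phi, \, A\psi - a\phi)$. The terms proportional to $a$ cancel in matched pairs, and what remains is
\[ \langle \tilde{A}\phi, \theta\rangle - \langle \tilde{A}\theta, \phi\rangle + \langle \psi, A\chi\rangle - \langle \chi, A\psi\rangle. \]
This vanishes by the symmetry of the spinor brackets $\cdot \maps S_\pm \tensor S_\pm \to V$ from Section \ref{sec:n+2}: the explicit formulas $\psi \cdot \phi = \widetilde{\psi\phi^\dagger + \phi\psi^\dagger}$ on $S_+$ and $\psi \cdot \phi = \psi\phi^\dagger + \phi\psi^\dagger$ on $S_-$ are manifestly symmetric in $\psi, \phi$, and combining this with the defining identities $g(\psi \cdot \chi, A) = \langle \psi, A\chi\rangle$ and $g(\phi \cdot \theta, A) = \langle \tilde{A}\phi, \theta\rangle$ yields $\langle \psi, A\chi\rangle = \langle \chi, A\psi\rangle$ and $\langle \tilde{A}\phi, \theta\rangle = \langle \tilde{A}\theta, \phi\rangle$.

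Nondegeneracy and skew-symmetry, by contrast, are immediate: skew-symmetry follows from swapping $\Psi \leftrightarrow \Phi$ in $\langle \psi, \theta\rangle - \langle \chi, \phi\rangle$, and nondegeneracy is inherited from the nondegeneracy of the $S_+ \tensor S_- \to \R$ pairing used in the $(n+2)$-dimensional case. I do not foresee a serious obstacle here: each product that appears involves at most two nonreal elements of $\K$, so alternativity dispatches any potential associativity issues exactly as in the proof of the Clifford relation $\Gamma(\A)^2 = h(\A,\A) 1$ above.
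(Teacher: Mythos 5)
Your proof is correct, but it takes a genuinely different route from the paper's. The paper works directly at the group level: it shows the matrix identity $\A^\dagger \Gamma^0 = -\Gamma^0 \A$, uses it together with the cyclic property of the real trace and the Clifford relation $\Gamma(\A)^2 = h(\A,\A)$ to conclude $\langle \A\Psi, \A\Phi\rangle = -|\A|^2\langle\Psi,\Phi\rangle$ in one stroke, and then observes that products of pairs of unit vectors therefore preserve the form. You instead pass to the Lie algebra: you establish skew-adjointness of each $\Gamma(\A)$ by block-decomposing $\S = S_+ \oplus S_-$ and reducing to the symmetry of the $(n+2)$-dimensional spinor-to-vector pairing, then note that commutators of skew-adjoint operators are skew-adjoint and exponentiate. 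Both arguments are sound; the key computational facts are essentially equivalent (your skew-adjointness $\langle\A\Psi,\Phi\rangle = -\langle\Psi,\A\Phi\rangle$ is exactly the content of the paper's $\A^\dagger\Gamma^0 = -\Gamma^0\A$ after unwinding), but the paper's route is a single matrix manipulation whereas yours unwinds into four terms and invokes the earlier section's explicit formulas for $\cdot$ on $S_\pm$.

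One point worth flagging: the infinitesimal argument, as stated, establishes invariance under the \emph{identity component} of $\Spin(\V)$ --- the image of the exponential map on bivectors. The paper's $\Spin(\V)$ is defined as the subgroup of $\Cliff(\V)^\times$ generated by products of pairs of unit vectors, which in indefinite signature can have more than one component. The paper's proof sidesteps this by working with those group generators directly (though even there the sign bookkeeping in $-|\A|^2$ versus $h(\A,\A)$ is somewhat delicate when $\A$ is timelike). If you want your argument to match the paper's stated conclusion exactly, you should either note that the relevant representations extend continuously, or supplement the Lie-algebra argument with a check on one representative of each non-identity component. This is a mild convention issue rather than a mathematical gap, but it is a real distinction between the two approaches.
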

\begin{proof}
	It is easy to see that, for any spinors $\Psi, \Phi \in \S$
	and vectors $\A \in \V$, we have
	\[ \langle \A \Psi, \A \Phi \rangle = \Re \left( ( \Psi^\dagger \A^\dagger ) \Gamma^0 ( \A \Phi ) \right) = \Re \left( \Psi^\dagger ( \A^\dagger \Gamma^0 ( \A \Phi ) ) \right)\]
	where in the last step we have used Proposition \ref{prop:realtrace}. Now, given that
	\[ \A = \left( \begin{matrix} a & \tilde{A} \\ A & -a \end{matrix} \right) \]
	a quick calculation shows:
	\[ \A^\dagger \Gamma^0 = -\Gamma^0 \A. \]
	So, this last expression becomes:
	\[ -\Re \left( \Psi^\dagger ( \Gamma^0 \A ( \A \Phi ) ) \right) = -\Re \left( \Psi^\dagger ( \Gamma^0 \Gamma(\A)^2 \Phi ) ) \right) = -|\A|^2 \Re \left( \Psi^\dagger \Gamma^0 \Phi \right) \]
	where in the last step we have used the Clifford relation. Summing up, we have shown: 
	\[ \langle \A \Psi, \A \Phi \rangle = -|\A|^2 \langle \Psi, \Phi \rangle \]
	In particular, when $\A$ is a unit vector, acting by
	$\A$ changes the sign at most. Thus, $\langle -,- \rangle$ is
	invariant under the group generated by products of pairs of unit
	vectors, which is $\Spin(\V)$. It is easy to see that it is
	nondegenerate, and it is skew-symmetric because $\Gamma^0$ is.
\end{proof}

With the form $\langle -,- \rangle$ in hand, there is a manifestly equivariant
way to turn a pair of spinors into a vector.  Given $\Psi, \Phi \in
\S$, there is a unique vector $\Psi \cdot \Phi$ whose inner product
with any vector $\A$ is given by
\[ h(\Psi \cdot \Phi, \A) = \langle \Psi, \Gamma(\A) \Phi \rangle .\]

It will be useful to have an explicit formula for this operation:

\begin{prop} 
\label{prop:dot}
Given $\Psi = (\psi_1, \psi_2)$ and $\Phi = (\phi_1, \phi_2)$ in
$\S = S_+ \oplus S_-$, we have:
\[ \Psi \cdot \Phi = \left( 
\begin{array}{cc}
	\langle \psi_1, \phi_2 \rangle + \langle \phi_1, \psi_2 \rangle & -\widetilde{\psi_1 \cdot \phi_1} + \widetilde{\psi_2 \cdot \phi_2} \\
	-\psi_1 \cdot \phi_1 + \psi_2 \cdot \phi_2                      & -\langle \psi_1, \phi_2 \rangle - \langle \phi_1, \psi_2 \rangle \\
\end{array} \right)
\]
\end{prop}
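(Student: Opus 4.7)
The defining relation $h(\Psi\cdot\Phi,\A)=\langle\Psi,\Gamma(\A)\Phi\rangle$ is meant to hold for every $\A\in\V$, so the strategy is to write both sides as functions of $\A$, parametrize $\A$ as $\A=\left(\begin{smallmatrix} a & \tilde A \\ A & -a\end{smallmatrix}\right)$ with $a\in\R$ and $A\in V=\h_2(\K)$, and then read off the matrix entries of $\Psi\cdot\Phi$ by matching coefficients.

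Write the unknown as $\Psi\cdot\Phi=\left(\begin{smallmatrix} b & \tilde B \\ B & -b\end{smallmatrix}\right)$ for some $b\in\R$ and some $B\in V$; Proposition \ref{prop:metric2} (polarized) then gives
\[
h(\Psi\cdot\Phi,\A)\;=\;ab\;+\;g(B,A).
\]
On the right-hand side, block multiplication of $\A\Phi$ produces a pair whose entries split cleanly into an $a$-part and an $A$-part; plugging into $\langle(\psi,\phi),(\chi,\theta)\rangle=\langle\psi,\theta\rangle-\langle\chi,\phi\rangle$ from the preceding proposition, the $a$-part becomes a combination of $S_+\!\otimes\! S_-$ pairings of the components of $\Psi$ and $\Phi$, while the $A$-part becomes a combination of terms of the form $\langle\psi,\gamma(A)\phi\rangle$ with $\psi,\phi\in S_+$ and $\langle\tilde\gamma(A)\psi,\phi\rangle$ with $\psi,\phi\in S_-$. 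By the defining properties of the brackets $\cdot:S_\pm\otimes S_\pm\to V$ recalled in Section \ref{sec:n+2}, each such term equals $g(\psi\cdot\phi,A)$ for the appropriate chirality.

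Matching the coefficient of $a$ (take $A=0$, vary $a$) determines $b$; matching the part linear in $A$ (take $a=0$) determines $B$ by nondegeneracy of $g$ on $V$. The trace-reversed top-right block $\tilde B$ is then automatic from the parametrization of $\A$, so the two off-diagonal blocks of $\Psi\cdot\Phi$ are recovered together.

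\textbf{Where the effort lies.} There is no new nonassociative manipulation to do here beyond the real-trace identity (Proposition \ref{prop:realtrace}) already used to set up the Clifford relation in Proposition \ref{prop:metric2}; the work is all bookkeeping. The main places one has to be careful are: which pairings are $S_+\!\otimes\! S_-$ versus $S_\pm\!\otimes\! S_\pm$, the symmetry of the bracket $\cdot:S_\pm\otimes S_\pm\to V$ which lets one freely swap the two arguments, and the sign conventions baked into $\Gamma^0$ and hence into $\langle-,-\rangle_\S$. Once those are organized, extracting $b$ and $B$ is a direct read-off.
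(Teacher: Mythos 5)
Your proposal is correct and takes essentially the same route as the paper: decompose $\V$ orthogonally into the copy of $V$ (the block-off-diagonal part) and the extra spatial direction $\left(\begin{smallmatrix}a&0\\0&-a\end{smallmatrix}\right)$, then use the defining relation $h(\Psi\cdot\Phi,\A)=\langle\Psi,\Gamma(\A)\Phi\rangle$ to read off the $V$-component $B$ via nondegeneracy of $g$ (reducing it to the brackets $\cdot:S_\pm\otimes S_\pm\to V$) and the scalar $b$ via the coefficient of $a$. The only superficial difference is that you package both extractions in one coefficient-matching step while the paper does them sequentially; this is cosmetic.

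One small remark worth making explicitly: when you carry out the $A$-part, you get $B=-\psi_1\cdot\phi_1+\psi_2\cdot\phi_2$, which is the only expression that typechecks (each $\cdot$ pairing two spinors of the same chirality). The displayed formula in the Proposition statement, $-\psi_1\cdot\psi_2+\phi_1\cdot\phi_2$, contains a typographical slip in the subscripts; the corrected form is what the paper's own proof derives and what is used in the subsequent proof of the 4-$\Psi$'s rule (where $\Psi=\Phi$ yields $-\psi\cdot\psi+\phi\cdot\phi$). Your bookkeeping, being driven by the chirality of the pairings and brackets, would naturally catch this.
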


\begin{proof}
	Decompose $\V$ into orthogonal subspaces:
	\[ \V = \left\{ \left( \begin{matrix} 0 & \tilde{A} \\ A & 0 \end{matrix} \right) : A \in V \right\} \oplus \left\{ \left( \begin{matrix} a & 0 \\ 0 & -a \end{matrix} \right) : a \in \R \right\} \]
	The first of these is just a copy of $V$, an $(n+2)$-dimensional
	Minkowski spacetime. The second is the single extra spatial dimension
	in our $(n+3)$-dimensional Minkowski spacetime, $\V$.

	Now, use the definition of $\Psi \cdot \Phi$, but restricted to $V$. It
	is easy to see that, for any vector $A \in V$, we have:
	\[ h(\Psi \cdot \Phi, A) = -\langle \psi_1, \gamma(A) \phi_1 \rangle + \langle \tilde{\gamma}(A) \psi_2, \phi_2 \rangle \]
	Letting $B$ be the component of $\Psi \cdot \Phi$ which lies in $V$, this becomes:
	\[ g(B, A) =  -\langle \psi_1, \gamma(A) \phi_1 \rangle + \langle \tilde{\gamma}(A) \psi_2, \phi_2 \rangle. \]
	Note that we have switched to the metric $g$ on $V$, to which $h$
	restricts. By definition, this is the same as:
	\[ g(B, A) =  g(-\psi_1 \cdot \phi_1 + \psi_2 \cdot \phi_2, A). \]
	Since this holds for all $A$, we must have $B = -\psi_1 \cdot \phi_1 + \psi_2 \cdot \phi_2$.

	It remains to find the component of $\Psi \cdot \Phi$ orthogonal to
	$B$. Since $\left\{ \left( \begin{smallmatrix} a & 0 \\ 0 & -a
	\end{smallmatrix} \right) : a \in \R \right\}$ is 1-dimensional, this
	is merely a number. Specifically, it is the constant of proportionality
	in the expression:
	\[ h \left(\Psi \cdot \Phi, \left( \begin{smallmatrix} a & 0 \\ 0 & -a \end{smallmatrix} \right) \right) = a ( \langle \psi_1, \phi_2 \rangle + \langle \phi_1, \psi_2 \rangle ) \]
	Thus, this component is $\langle \psi_1, \phi_2 \rangle + \langle \phi_1, \psi_2 \rangle$. Putting everything together, we get 
	\[ \Psi \cdot \Phi = \left(
          \begin{array}{cc}
            \langle \psi_1, \phi_2 \rangle + \langle \phi_1, \psi_2 \rangle & -\widetilde{\psi_1 \cdot \phi_1} + \widetilde{\psi_2 \cdot \phi_2} \\
            -\psi_1 \cdot \phi_1 + \psi_2 \cdot \phi_2                      & -\langle \psi_1, \phi_2 \rangle - \langle \phi_1, \psi_2 \rangle \\
          \end{array} \right)
	\]
\end{proof}

\section{The 4-$\Psi$'s Rule} \label{sec:4psi}

Spinors in dimension 4, 5, 7 and 11 satisfy an identity, written in
conventional notation as follows:
\[ \Psibar \Gamma_{ab} \Psi \Psibar \Gamma^b \Psi = 0 \]
This identity shows up in two prominent places in the physics literature.
First, it is required for the existence of 2-brane theories in these dimensions
\cite{AchucarroEvans, Duff}. This is because it allows the construction of a
Wess--Zumino--Witten term for these theories, which give these theories
Siegel symmetry.  

Yet it is known that 2-branes in 11 dimensions are intimately connected to
supergravity. Indeed, the Siegel symmetry imposed by the WZW term constrains
the 2-brane background to be that of 11-dimensional supergravity \cite{Tanii}.
So it should come as no surprise that this spinor identity also plays a
crucial role in supergravity, most visibly in the work of D'Auria and 
Fr\'e \cite{DAuriaFre} and subsequent work by Sati, Schreiber and Stasheff 
\cite{SSS}.   For previous work on its relation to division algebras, see
Foot and Joshi \cite{FJ}. 

This identity is equivalent to the `4-$\Psi$'s rule':
\[ \Psi \cdot ((\Psi \cdot \Psi) \Psi) = 0 . \]
To see this, note that we can turn a pair of spinors $\Psi$ and $\Phi$ into a
2-form, $\Psi * \Phi$.  This comes from the fact that we can embed bivectors inside
the Clifford algebra $\Cliff(\V)$ via the map
\[ \A \wedge \B \mapsto \A\B - \B\A \in \Cliff(\V). \]
These can then act on spinors using the Clifford action. Thus, define:
\begin{equation}
\label{eq:star}
 (\Psi * \Phi) (\A, \B) = \langle \Psi, (\A \wedge \B) \Phi \rangle. 
\end{equation}
But when $\Psi = \Phi$, we can simplify this using the Clifford relation:
\begin{eqnarray*}
	(\Psi * \Psi) (\A, \B) & = & \langle \Psi, (\A \B - \B \A) \Psi \rangle \\
	                       & = & \langle \Psi, 2\A \B \Psi \rangle - 2 \langle \Psi, \Psi \rangle h(\A, \B) \\
			       & = & 2 \langle \Psi, \A \B \Psi \rangle
\end{eqnarray*}
where we have used the skew-symmetry of the form. The index-ridden identity
above merely says that inserting the vector $\Psi \cdot \Psi$ into one slot
of the 2-form $\Psi * \Psi$ is zero, no matter what goes into the other slot:
\[ (\Psi * \Psi) (\A, \Psi \cdot \Psi) = 2 \langle \Psi, \A(\Psi \cdot \Psi) \Psi \rangle  = 0\]
for all $\A$. By the definition of the $\cdot$ operation, this is the same as
\[ 2 h( \Psi \cdot ( (\Psi \cdot \Psi) \Psi), \A) = 0 \]
for all $\A$. Thus, the index-ridden identity is equivalent to:
\[ \Psi \cdot ((\Psi \cdot \Psi) \Psi) = 0 \]
as required.

Now, let us prove this:
\begin{thm}
Suppose $\Psi \in \S$.  Then $\Psi \cdot ((\Psi \cdot \Psi) \Psi) = 0$. 
\end{thm}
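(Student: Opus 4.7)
The plan is to write $\Psi = (\psi,\phi) \in S_+ \oplus S_- = \S$ and apply Proposition \ref{prop:dot} twice to reduce everything to the 3-$\psi$'s rule (Theorem \ref{thm:fundamental_identity}). Setting $\Phi = \Psi$ in that proposition gives
\[
\Psi \cdot \Psi = \begin{pmatrix} a & \tilde{A} \\ A & -a \end{pmatrix},
\qquad a := 2\langle\psi,\phi\rangle, \qquad A := -\psi\cdot\psi + \phi\cdot\phi \in V.
\]
Matrix-multiplying into the column $\Psi$ and using the 3-$\psi$'s rule, which makes $(\psi\cdot\psi)\psi$ and $\widetilde{\phi\cdot\phi}\,\phi$ vanish, simplifies the product to
\[
(\Psi\cdot\Psi)\Psi = \bigl(\, a\psi - \widetilde{\psi\cdot\psi}\,\phi,\; (\phi\cdot\phi)\psi - a\phi \,\bigr).
\]
Applying Proposition \ref{prop:dot} once more, proving the theorem reduces to showing that every entry of the resulting $2\times 2$ matrix vanishes.

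The scalar (diagonal) entries are the easy part: after the $a$-terms cancel, they collapse to $g(\psi\cdot\psi,\phi\cdot\phi) - g(\phi\cdot\phi,\psi\cdot\psi)$, which is zero by the symmetry of $g$.

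The main obstacle is the vector (off-diagonal) entry, which simplifies to
\[
-a(\psi\cdot\psi + \phi\cdot\phi) + Y + Z, \qquad Y := \psi\cdot(\widetilde{\psi\cdot\psi}\,\phi), \qquad Z := \phi\cdot((\phi\cdot\phi)\psi).
\]
I intend to prove $Y = a\,\psi\cdot\psi$ and $Z = a\,\phi\cdot\phi$, so that the three pieces cancel. To identify $Y$, I compute $g(Y,B) = \langle\psi, B(\widetilde{\psi\cdot\psi}\,\phi)\rangle$ for arbitrary $B \in V$. The $\Spin(\V)$-invariance of $\langle-,-\rangle$, combined with the embedding $V \hookrightarrow \V$, yields the standard symmetries $\langle\eta,B\xi\rangle = \langle\xi,B\eta\rangle$ on $S_+$ and $\langle\tilde{A}\eta,\xi\rangle = \langle\tilde{A}\xi,\eta\rangle$ on $S_-$, which let me shuffle the Clifford actions until the pairing reads $\langle\widetilde{\psi\cdot\psi}(B\psi),\phi\rangle$. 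The Clifford anticommutation relation $\widetilde{\psi\cdot\psi}\,B + \tilde{B}\,(\psi\cdot\psi) = 2g(\psi\cdot\psi,B)$ on $S_+$, which is a block-matrix consequence of $\Gamma(\A)^2 = h(\A,\A)\,1$, together with the 3-$\psi$'s rule $(\psi\cdot\psi)\psi = 0$, yields $\widetilde{\psi\cdot\psi}(B\psi) = 2g(\psi\cdot\psi,B)\psi$. Hence $g(Y,B) = 2\langle\psi,\phi\rangle\, g(\psi\cdot\psi,B) = a\,g(\psi\cdot\psi,B)$ for every $B$, so $Y = a\,\psi\cdot\psi$. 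A strictly parallel argument using the $S_-$ form of both the Clifford relation and the 3-$\psi$'s rule identifies $Z = a\,\phi\cdot\phi$, and the theorem follows.
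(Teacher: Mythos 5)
Your proof is correct, but it takes a genuinely different route from the paper's. Both begin identically: decompose $\Psi = (\psi,\phi)$, use Proposition~\ref{prop:dot} once, and apply the 3-$\psi$'s rule to simplify $(\Psi\cdot\Psi)\Psi$. From there the paper computes \emph{only} the scalar $(1,1)$ entry of $\Psi\cdot((\Psi\cdot\Psi)\Psi)$, shows it vanishes via the cyclic property of the real trace (Proposition~\ref{prop:realtrace}), and then dispatches the harder vector entry in one stroke: the map $\Psi \mapsto \Psi\cdot((\Psi\cdot\Psi)\Psi)$ is $\Spin(\V)$-equivariant and $\V$ is irreducible, so if one component of the output always vanishes, so do all of them. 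You instead grind out the vector entry directly, reducing it to showing $\psi\cdot(\widetilde{\psi\cdot\psi}\,\phi) = 2\langle\psi,\phi\rangle\,\psi\cdot\psi$ and $\phi\cdot((\phi\cdot\phi)\psi) = 2\langle\psi,\phi\rangle\,\phi\cdot\phi$ via the symmetries of $\langle-,-\rangle$ coming from the symmetry of $\cdot$ on $S_\pm$, the polarized Clifford relation $\tilde{A}(B\xi) + \tilde{B}(A\xi) = 2g(A,B)\xi$, and the 3-$\psi$'s rule. Both work; the paper's equivariance-plus-irreducibility shortcut is shorter and cleaner, while your direct computation is more elementary and explicitly exhibits the cancellation mechanism in the vector block, and also yields the nice auxiliary identities $\widetilde{\psi\cdot\psi}(B\psi) = 2g(\psi\cdot\psi,B)\psi$ and its $S_-$ analogue, which are interesting in their own right. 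One thing worth making explicit if you write this up: the symmetries $\langle\eta,B\xi\rangle = \langle\xi,B\eta\rangle$ on $S_+$ and $\langle\tilde{A}\eta,\xi\rangle = \langle\tilde{A}\xi,\eta\rangle$ on $S_-$ are nothing more than the symmetry of $\cdot$ (via the defining relations $g(\psi\cdot\phi,A)=\langle\psi,\gamma(A)\phi\rangle$ and $g(\psi\cdot\phi,A)=\langle\tilde\gamma(A)\psi,\phi\rangle$), so you need not reinvoke $\Spin$-invariance or worry separately about octonionic nonassociativity there.
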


\begin{proof}
	Let $\Psi = (\psi, \phi)$.  By Proposition \ref{prop:dot},
	\[ \Psi \cdot \Psi = \left( \begin{array}{cc} 2 \langle \psi, \phi \rangle       & -\widetilde{\psi \cdot \psi} + \widetilde{\phi \cdot \phi} \\
	                                              -\psi \cdot \psi + \phi \cdot \phi & -2 \langle \psi, \phi \rangle  \\
					      \end{array} \right) \]
and thus
	\[ (\Psi \cdot \Psi) \Psi = \left( \begin{array}{c} 2 \langle \psi, \phi \rangle \psi - (\widetilde{\psi \cdot \psi}) \phi + (\widetilde{\phi \cdot \phi}) \phi \\
	                                                    -(\psi \cdot \psi)\psi + (\phi \cdot \phi)\psi - 2\langle \psi, \phi \rangle \phi  \\
					      \end{array} \right). \]
	Both $(\psi \cdot \psi)\psi = 0$ and 
$(\widetilde{\phi \cdot \phi}) \phi = 0$ by the 3-$\psi$'s rule,
Theorem \ref{thm:fundamental_identity}.
	So:
	\[ (\Psi \cdot \Psi) \Psi = \left( \begin{array}{c} 2 \langle \psi, \phi \rangle \psi - (\widetilde{\psi \cdot \psi}) \phi \\
	                                                    (\phi \cdot \phi)\psi - 2\langle \psi, \phi \rangle \phi  \\
					      \end{array} \right). \]
	The resulting matrix for $\Psi \cdot ((\Psi \cdot \Psi) \Psi)$ is large
	and unwieldy, so we shall avoid writing it out. Fortunately, all
	we really need is the $(1,1)$ entry. Recall, this is the component of the
	vector $\Psi \cdot ((\Psi \cdot \Psi) \Psi)$ that is orthogonal to the
	subspace $V \subset \V$. Call this component $a$.  A calculation shows:
	\begin{eqnarray*}
       a & = & \langle \psi, (\phi \cdot \phi)\psi \rangle - \langle (\widetilde{\psi \cdot \psi}) \phi, \phi \rangle \\
		  & = & \Retr (\psi^\dagger (2 \phi \phi^\dagger) \psi)  - \Retr (\phi^\dagger (2 \psi \psi^\dagger) \phi ) \\
		  & = & 0
	\end{eqnarray*}
where the two terms cancel by the cyclic property of the real trace, 
Proposition \ref{prop:realtrace}.   Thus, this
component of the vector $\Psi \cdot ((\Psi \cdot \Psi) \Psi)$ vanishes.
But since the map $\Psi \mapsto \Psi \cdot ((\Psi \cdot \Psi) \Psi)$ is 
equivariant with respect to the action of $\Spin(\V)$, and $\V$ is an 
irreducible representation of this group, it follows that all components of 
this vector must vanish.
\end{proof}

\section{Cohomology of Lie Superalgebras} \label{sec:cohomology}

In this section we explain more of the meaning of the 3-$\psi$'s rule
and 4-$\Psi$'s rules: they are \textit{cocycle conditions}.   In any
dimension, a symmetric bilinear intertwining operator that eats two 
spinors and spits out a vector gives rise to a `super-Minkowski
spacetime'.  The infinitesimal translation symmetries of this object
form a Lie superalgebra called the `supertranslation algebra'
\cite{Deligne}.  The cohomology of this Lie superalgebra
is interesting and apparently rather subtle.  We shall see that its
3rd cohomology is nontrivial in dimensions 3, 4, 6 and 10, thanks to
the 3-$\psi$'s rule.  Similarly, its 4th cohomology is nontrivial
in dimensions 4, 5, 7 and 11, thanks to the 4-$\Psi$'s rule. 

We begin by recalling the supertranslation algebra.  Take $V$ to
be the space of vectors in Minkowski spacetime in any dimension,
and take $S$ to be any spinor representation in this dimension.
Suppose that there is a symmetric equivariant bilinear map:
\[ \cdot \maps S \tensor S \to V. \]
Form a super vector space $\T$ with 
\[ \T_0 = V , \qquad \T_1 = S .\]
We make $\T$ into a Lie superalgebra, the \textbf{supertranslation
algebra}, by giving it a suitable bracket operation.  This bracket
will be zero except when we bracket 
a spinor with a spinor, in which case it is simply the operation 
\[ \cdot \maps S \tensor S \to V. \]
Since this is symmetric and spinors are odd, the bracket operation is
super-skew-symmetric overall. Furthermore, the Jacobi identity holds trivially,
thanks to the near triviality of the bracket.  Thus $\T$ is 
indeed, a Lie superalgebra.

Despite the fact that $\T$ is nearly trivial, its cohomology is not.
To see this, we must first recall how to generalize
Chevalley--Eilenberg cohomology \cite{AzcarragaIzquierdo, 
ChevalleyEilenberg} from Lie algebras to Lie
superalgebras.  Suppose $\g$ is a Lie superalgebra and $R$ is a
representation of $\g$.  That is, $R$ is a supervector space equipped
with a Lie superalgebra homomorphism $\rho \maps \g \to \gl(X)$.  We
now define the cohomology groups of $\g$ with values in $R$.

First, of course, we need a cochain complex. We define the
\define{$n$-cochains} $C^n(\g, R)$ to be the vector space of
super-skew-symmetric $n$-linear maps:
\[ \Lambda^n \g \to R. \]
In fact, the $n$-cochains $C^n(\g, R)$ are a super vector space, in which
parity-preserving elements are even, while parity-reversing elements are odd.

Next, we define the coboundary operator $d \maps C^n(\g, R) \to
C^{n+1}(\g, R)$. Let $\omega$ be a homogeneous $n$-cochain and let $X_1, \dots,
X_{n+1}$ be homogeneous elements of $\g$. Now define:
\begin{eqnarray*}
& & d\omega(X_1, \dots, X_{n+1}) = \\ 
& & \sum^{n+1}_{i=1} (-1)^{i+1} (-1)^{|X_i||\omega|} \epsilon^{i-1}_1(i) \rho(X_i) \omega(X_1, \dots, \hat{X}_i, \dots, X_{n+1}) \\
& & + \sum_{i < j} (-1)^{i+j} (-1)^{|X_i||X_j|} \epsilon^{i-1}_1(i) \epsilon^{j-1}_1(j) \omega([X_i, X_j], X_1, \dots, \hat{X}_i, \dots, \hat{X}_j, \dots X_{n+1})
\end{eqnarray*}
Here, $\epsilon^j_i(k)$ is shorthand for the sign one obtains by moving $X_k$
through $X_i, X_{i+1}, \dots, X_j$. In other words,
\[ \epsilon^j_i(k) = (-1)^{|X_k|(|X_i| + |X_{i+1}| + \dots + |X_j|)}. \]

Following the usual argument for Lie algebras, one can check that:

\begin{prop}
	The Lie superalgebra coboundary operator $d$ satisfies $d^2 = 0$.
\end{prop}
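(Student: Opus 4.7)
The plan is to compute $d^2\omega(X_1,\dots,X_{n+2})$ directly by applying the definition twice and showing that the resulting terms cancel in three distinct groups. The cleanest way to organize the bookkeeping is to observe that each term in $d^2\omega$ has one of three types: an \emph{action-action} term of the form $\rho(X_i)\rho(X_j)\omega(\dots)$ in which two of the arguments have been pushed into $\rho$; a \emph{mixed} term of the form $\rho(X_i)\omega([X_j,X_k],\dots)$ in which one argument has been pushed into $\rho$ and two others have been combined by the bracket; and a \emph{bracket-bracket} term, which itself splits into the nested form $\omega([X_i,[X_j,X_k]],\dots)$ and the disjoint form $\omega([X_i,X_j],[X_k,X_\ell],\dots)$.

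For each pair $\{i,j\}$, the two action-action contributions combine into $[\rho(X_i),\rho(X_j)]_{\text{super}}\omega(\dots)$; since $\rho$ is a Lie superalgebra homomorphism, this equals $\rho([X_i,X_j])\omega(\dots)$, which exactly cancels the mixed term in which $[X_i,X_j]$ appears inside $\omega$ and is then hit by $\rho$ of some outer argument. Separately, for each unordered triple $\{i,j,k\}$, the three nested bracket-bracket terms $\omega([X_i,[X_j,X_k]],\dots)$ and their cyclic partners sum to zero by the super-Jacobi identity on $\g$. Finally, the disjoint bracket-bracket terms $\omega([X_i,X_j],[X_k,X_\ell],\dots)$ occur twice, once from applying the bracket sum in $d$ followed by the bracket sum again with the two new pairs; these cancel in pairs because each unordered pair of disjoint pairs appears with opposite overall sign once the $(-1)^{i+j}$ factors are collected.

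The main obstacle, of course, is the sign bookkeeping: every factor $\epsilon^{i-1}_1(i)$, every $(-1)^{|X_i||\omega|}$, and every $(-1)^{|X_i||X_j|}$ must be tracked when one argument is removed and subsequently a second argument or a bracket must be shuffled past the remaining ones. I would handle this by first fixing an ordered tuple $(X_1,\dots,X_{n+2})$ of homogeneous elements, introducing the shorthand $\epsilon^j_i(k)$ consistently, and then verifying each of the three cancellations on a representative term of each type; by the universality of the sign convention, the cancellation for one ordering of indices forces it for all others. As a sanity check, one can reduce to the purely even case, where $\T_1 = 0$ and all grading signs trivialize, and recover the standard Chevalley--Eilenberg identity $d^2=0$; the super case then differs only in replacing $(-1)^{\bullet}$ combinatorial signs by their Koszul-graded analogues, so that the same combinatorial cancellations go through verbatim.
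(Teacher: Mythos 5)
The paper does not actually prove this proposition---it simply writes ``Following the usual argument for Lie algebras, one can check that\ldots'' and leaves the verification to the reader. You are therefore supplying an argument the paper omits, and the overall approach you choose (expand $d^2\omega$, sort the terms by type, cancel within each group using the $\rho$-homomorphism property, super-Jacobi, and combinatorial sign identities) is indeed the standard and appropriate one.

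However, your description of \emph{which} terms cancel with which is muddled in a way that would trip you up if you wrote it out in full. Expanding $d^2\omega$, the terms involving exactly one $\rho$ actually come in \emph{two} distinct flavors: (i) $\rho([X_i,X_j])\,\omega(\dots)$, arising when the outer $d$ produces the bracket $[X_i,X_j]$ as a new argument and the inner $d$ then applies $\rho$ to that new argument; and (ii) $\rho(X_k)\,\omega([X_i,X_j],\dots)$ with $k\neq i,j$, the genuine ``mixed'' terms. You conflate these. The commutator $[\rho(X_i),\rho(X_j)]=\rho([X_i,X_j])$ coming from the action-action pair cancels against type~(i), not against a mixed term ``in which $[X_i,X_j]$ appears inside $\omega$.'' The mixed terms of type~(ii), by contrast, cancel \emph{among themselves}: each such term arises twice, once from applying $\rho(X_k)$ first and forming $[X_i,X_j]$ second, and once in the opposite order, with opposite overall sign. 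Your write-up drops the type~(i) terms from the classification entirely and pairs type~(ii) with the action-action commutator instead, so the cancellation you describe would not in fact close. The nested and disjoint bracket-bracket cancellations, and the sanity check against the purely even case, are fine. Fixing the pairing as above, and then tracking the Koszul signs and $\epsilon^j_i(k)$ factors through each of the four cancellation groups, would complete the argument.
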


\noindent
We thus say a $R$-valued $n$-cochain $\omega$ on $\g$ is an
\define{$n$-cocycle} or \define{closed} when $d \omega = 0$, and an
\define{$n$-coboundary} or \define{exact} if there exists an $(n-1)$-cochain
$\theta$ such that $\omega = d \theta.$   Every $n$-coboundary is an
$n$-cocycle, and we say an $n$-cocycle is \define{trivial} if it is a
coboundary.  We denote the super vector spaces of $n$-cocycles and
$n$-coboundaries by $Z^n(\g,V)$ and $B^n(\g,V)$ respectively.  The $n$th {\bf
Lie superalgebra cohomology of $\g$ with coefficients in $R$}, denoted
$H^n(\g,R)$ is defined by 
\[ H^n(\g,R) = Z^n(\g,R)/B^n(\g,R). \]
This super vector space is nonzero if and only if there is a nontrivial
$n$-cocycle. In what follows, we shall be especially concerned with the even
part of this super vector space, which is nonzero if and only if there is a
nontrivial even $n$-cocycle. Our motivation for looking for even cocycles is
simple: these parity-preserving maps can regarded as morphisms in the category
of super vector spaces, which is crucial for the construction in Theorem
\ref{trivd} and everything following it.

Now consider Minkowski spacetimes of dimensions 3, 4, 6, and 10.  Here
Minkowski spacetime can be written as $V = \h_2(\K)$, and we can take our
spinors to be $S_+ = \K^2$.  Since from Section \ref{sec:n+2} we know there is
a symmetric bilinear intertwiner $\cdot \maps S_+ \otimes S_+ \to V$, we obtain
the supertranslation algebra $\T = V \oplus S_+$.  We can decompose the space
of $n$-cochains with coefficients in the trivial 1-dimensional representation
of $\T$ into summands by counting how many of the arguments are vectors and how
many are spinors:
\[     C^n(\T,\R) \iso  
 \bigoplus_{p + q = n}   (\Lambda^p(V) \otimes \Sym^q(S_+))^* . \]
We call an element of $(\Lambda^p(V) \otimes \Sym^q(S_+))^*$ a 
{\bf {\boldmath$(p,q)$-form}}.  Since the bracket of two spinors is
a vector, and all other brackets are zero, $d$ of a $(p,q)$-form is
a $(p-1,q+2)$-form.  

Using the 3-$\psi$'s rule we can show:

\begin{thm}
\label{thm:3-cocycle}
      In dimensions 3, 4, 6 and 10, the supertranslation algebra $\T$
	has a nontrivial even 3-cocycle taking values in the trivial 
        representation $\R$, namely the unique $(1,2)$-form with 
	\[ \alpha(\psi, \phi, A) = g(\psi \cdot \phi, A) \]
	for spinors $\psi, \phi \in S_+$ and vectors $A \in V$.
    \end{thm}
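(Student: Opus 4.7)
The plan is to verify three things about the proposed $(1,2)$-form $\alpha$: that it is genuinely a well-defined even super-skew-symmetric cochain, that $d\alpha = 0$, and that $\alpha$ is not a coboundary. Well-definedness and evenness are essentially free: since $\cdot \maps S_+ \otimes S_+ \to V$ is symmetric and two odd arguments swap with no sign in the super-skew convention, $\alpha(\psi,\phi,A) = g(\psi\cdot\phi, A)$ extends uniquely to a $(1,2)$-form; its arguments have total parity $0+1+1 = 0$, so it is parity-preserving.

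For the cocycle condition, the key observation is that the bracket in $\T$ satisfies $[V,V] = [V,S] = 0$ and $[S,S] \subset V$. Since we work with trivial coefficients, the $\rho$-terms in the Chevalley--Eilenberg coboundary drop out, and the only surviving bracket terms pair two spinors to give a vector. Counting slots shows that $d\alpha$ can be nonzero only on a tuple of four spinors (it is a $(0,4)$-form). I would then expand $d\alpha(\psi_1,\psi_2,\psi_3,\psi_4)$ using the CE formula; after collecting signs and using the symmetry of $\cdot$ and $g$, the six bracket contributions reduce to a nonzero multiple of the fully symmetrized expression
\[
g(\psi_1\cdot\psi_2,\,\psi_3\cdot\psi_4) + g(\psi_1\cdot\psi_3,\,\psi_2\cdot\psi_4) + g(\psi_1\cdot\psi_4,\,\psi_2\cdot\psi_3).
\]
This is precisely the polarization of the diagonal expression $g(\psi\cdot\psi,\psi\cdot\psi)$. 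By the defining property of $\cdot$, that diagonal equals $\langle \psi, (\psi\cdot\psi)\psi\rangle$, which vanishes by the 3-$\psi$'s rule (Theorem~\ref{thm:fundamental_identity}). Since polarization of a symmetric multilinear form vanishes iff the form itself does, $d\alpha = 0$.

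For nontriviality, I would decompose any candidate even 2-cochain into a $(2,0)$-part and a $(0,2)$-part. A $(0,2)$-form has vanishing coboundary: a direct count shows $d\beta$ would be a $(-1,4)$-form, and the one potentially nonzero evaluation $d\beta(\psi,\phi,A) \propto \beta(\psi\cdot\phi, A)$ is zero because $\beta$ vanishes on vector arguments. For a $(2,0)$-form $\beta$, skew-symmetric on $V\otimes V$, the coboundary is $d\beta(\psi,\phi,A) = \pm \beta(\psi\cdot\phi, A)$. Requiring this to equal $g(\psi\cdot\phi, A)$ forces $\beta$ to agree with $\pm g$ on the image of $\cdot$; but that image spans $V$ (any nonzero intertwiner $\Sym^2 S_+ \to V$ is surjective since $V$ is irreducible as a $\Spin(n+1,1)$-module, and $\cdot$ is clearly nonzero), so $\beta = \pm g$ on all of $V$, contradicting skew-symmetry versus the symmetry of $g$. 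Hence $\alpha \notin B^3(\T,\R)$. The main obstacle I expect is the sign bookkeeping in the CE formula when confirming that the six bracket terms add rather than partially cancel into the symmetrized form above; the rest of the argument is routine given the 3-$\psi$'s rule and the irreducibility of $V$.
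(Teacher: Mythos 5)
Your proposal is correct and follows essentially the same route as the paper: identify $d\alpha$ as a $(0,4)$-form, recognize it as the polarization of $g(\psi\cdot\psi,\psi\cdot\psi)=\langle\psi,(\psi\cdot\psi)\psi\rangle$, invoke the 3-$\psi$'s rule, and for nontriviality use surjectivity of $\cdot\maps S_+\otimes S_+\to V$ together with the symmetry/antisymmetry clash between $g$ and a $(2,0)$-form. The only difference is cosmetic: you explicitly handle the $(0,2)$-component of a candidate potential $\omega$, whereas the paper discards it in one stroke by noting that $d$ sends $(p,q)$-forms to $(p-1,q+2)$-forms, so the potential must already be a $(2,0)$-form.
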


\begin{proof}
	First, note that $\alpha$ has the right symmetry to be a linear map on
	$\Lambda^3(V \oplus S_+)$. Second, note that $\alpha$ is a
	$(1,2)$-form, eating one vector and two spinors. Thus $d\alpha$ is a
	$(0,4)$-form.

	Because spinors are odd, $d\alpha$ is a symmetric function of four
	spinors. By the definition of $d$, $d\alpha(\psi, \phi, \chi, \theta)$
	is the totally symmetric part of $\alpha(\psi \cdot \phi, \chi, \theta)
	= \alpha(\chi, \theta, \psi \cdot \phi) = g(\chi \cdot \theta, \psi
	\cdot \phi)$.   But any symmetric 4-linear form can be obtained from
	polarizing a quartic form. In this, we polarize $g(\psi \cdot \psi,
	\psi \cdot \psi)$ to get $d\alpha$. Thus:
	\[ d\alpha(\psi, \psi, \psi, \psi) = g(\psi \cdot \psi, \psi \cdot \psi) = \langle \psi, (\psi \cdot \psi) \psi \rangle \]
	where we have used the definition of the dot operation to obtain the last
	expression, which vanishes due to the 3-$\psi$ rule. Thus 
	$\alpha$ is closed.

	It remains to show $\alpha$ is not exact. So suppose it is exact, and that
	\[ \alpha = d\omega .\]
	By our remarks above we may assume $\omega$ is a $(2,0)$-form: that is, an 
	antisymmetric bilinear function of two vectors. By the definition of 
      $d$, this last equation says:
	\[ g(\psi \cdot \phi, A) = -\omega(\psi \cdot \phi, A). \]
	But since $S_+ \tensor S_+ \to V$ is onto, this implies
	\[ g = -\omega, \]
	a contradiction, since $g$ is symmetric while $\omega$ is
	antisymmetric.	
\end{proof}

Next consider Minkowski spacetimes of dimensions 4, 5, 7 and 11.  In this case
Minkowski spacetime can be written as a subspace $\V$ of the $4 \times 4$
matrices valued in $\K$, and we can take our spinors to be $\S = \K^4$.  Since
from Section \ref{sec:n+3} we know there is a symmetric bilinear intertwiner
$\cdot \maps \S \otimes \S \to \V$, we obtain a supertranslation algebra $\T =
\V \oplus \S$.  As before, we can uniquely 
decompose any $n$-cochain in $C^n(\T, \R)$ into a sum of 
$(p,q)$-forms, where now a {\bf {\boldmath$(p,q)$-form}} is an 
an element of $(\Lambda^p(\V) \otimes \Sym^q(\S))^*$. 
As before, $d$ of a $(p,q)$-form is a $(p-1,q+2)$-form.  
And using the 4-$\Psi$'s rule, we can show:

\begin{thm}
\label{thm:4-cocycle}
	In dimensions 4, 5, 7 and 11, the supertranslation algebra $\T$
	has a nontrivial even 4-cocycle, namely the unique $(2,2)$-form with
	\[ \beta(\Psi, \Phi, \A, \B) = \langle \Psi, (\A\B - \B\A) \Phi \rangle \]
	for spinors $\Psi, \Phi \in \S$ and vectors $\A, \B \in \V$.
        Here the commutator $\A\B - \B\A$ is taken in the Clifford algebra
        of $\V$.
\end{thm}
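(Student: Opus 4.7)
My plan is to mirror the three-part structure of the proof of Theorem \ref{thm:3-cocycle}: first check the claimed symmetries of $\beta$ so that it really defines a $(2,2)$-form, then verify closedness using the 4-$\Psi$'s rule, and finally show non-exactness. I expect non-exactness to be the main obstacle.

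To begin, antisymmetry of $\beta$ in $\A, \B$ is immediate from the commutator $\A\B - \B\A$. For symmetry in $\Psi, \Phi$, I would recycle the identity $\A^\dagger \Gamma^0 = -\Gamma^0 \A$ extracted in the proof of $\Spin(\V)$-invariance of $\langle -,-\rangle$; this yields $\langle \A\Psi, \Phi\rangle = -\langle\Psi, \A\Phi\rangle$, and applying it twice together with the skew-symmetry of $\langle -,-\rangle$ gives $\langle\Psi, (\A\B - \B\A)\Phi\rangle = \langle\Phi, (\A\B - \B\A)\Psi\rangle$.

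For closedness, observe that the only nonzero bracket on $\T$ sends two spinors to a vector, so with trivial coefficients the Chevalley--Eilenberg differential shifts a $(p, q)$-form to a $(p-1, q+2)$-form. Hence $d\beta$ is automatically a $(1, 4)$-form, symmetric in its four spinor slots, and by polarization I need only check that it vanishes on $(\Psi, \Psi, \Psi, \Psi, \A)$. All six pair-bracket terms collapse into a common scalar multiple of $\beta(\Psi, \Psi, \Psi\cdot\Psi, \A) = \langle\Psi, ((\Psi\cdot\Psi)\A - \A(\Psi\cdot\Psi))\Psi\rangle$; using the Clifford relation and the skew-symmetry of $\langle -,-\rangle$, this reduces to a multiple of $h(\Psi \cdot ((\Psi\cdot\Psi)\Psi), \A)$, which vanishes by the 4-$\Psi$'s rule.

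For non-exactness, I would suppose $\beta = d\omega$; since $d$ shifts bidegree by $(-1, +2)$, only the $(3, 0)$-component of $\omega$ can produce the $(2,2)$-form $\beta$, so I may take $\omega \in (\Lambda^3 \V)^*$. Because $\beta$ is $\Spin(n+2, 1)$-invariant and $d$ is equivariant, and because $\Spin(n+2, 1)$ being semisimple has completely reducible finite-dimensional representations, I may further assume $\omega$ is Lorentz-invariant. The conclusion then follows from the classical invariant-theoretic fact that the ring of $\Spin(p, q)$-invariant tensors on the defining representation is generated by the symmetric metric and the rank-$d$ volume form, so no nonzero invariant alternating 3-tensor exists once $\dim \V > 3$; hence $\omega = 0$ and $\beta = 0$, contradicting the obvious nonvanishing of $\beta$. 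This last step is the principal obstacle: the slick symmetric-versus-antisymmetric contradiction used in Theorem \ref{thm:3-cocycle} does not go through here, because $\beta(\Psi, \Phi, \A, \B)$ depends on $\Psi$ and $\Phi$ individually, not only through $\Psi \cdot \Phi$, so an invariant-theoretic detour appears to be the cleanest replacement.
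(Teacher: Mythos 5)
Your first two steps (checking symmetry and verifying closedness via polarization and the 4-$\Psi$'s rule) match the paper's proof closely. But your argument for non-exactness takes a genuinely different route, and it is worth comparing.

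The paper's approach is to take the interior product $i_X\beta$ with the extra unit vector $X = \left(\begin{smallmatrix} 1 & 0 \\ 0 & -1 \end{smallmatrix}\right)$ orthogonal to $V \subset \V$, restrict to the subalgebra $V \oplus S_+$, and observe that the result is a nonzero multiple of $\alpha$. Since $X$ is a vector and hence central in $\T$, one has $i_X d = -d\, i_X$, so $i_X$ preserves exactness; if $\beta$ were exact then $\alpha$ would be too, contradicting Theorem~\ref{thm:3-cocycle}. This is elementary but leans on the earlier theorem, and the verification of $i_X d = -d\, i_X$ takes a bit of sign bookkeeping.

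Your argument instead reduces $\omega$ to its $(3,0)$-component, uses complete reducibility of finite-dimensional $\Spin(n+2,1)$-representations (Weyl's theorem) together with the equivariance of $d$ to restrict to an invariant $\omega$, and then invokes the classical invariant-theory fact that $\Lambda^3\V^*$ has no nonzero $\SO(n+2,1)$-invariant once $\dim\V > 3$ (since $\Lambda^k V$ has a trivial summand only for $k=0$ or $k=\dim V$). This is self-contained and does not chain back to the $\alpha$ case, at the cost of invoking heavier representation-theoretic machinery. You also correctly diagnose why the symmetric-versus-antisymmetric trick from Theorem~\ref{thm:3-cocycle} does not carry over: $\beta(\Psi,\Phi,\A,\B)$ is built from the 2-form bilinear $\Psi * \Phi$ rather than the vector $\Psi\cdot\Phi$, whereas $d\omega$ for a $(3,0)$-form $\omega$ does factor through $\Psi\cdot\Phi$. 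Both proofs are valid; yours is arguably cleaner conceptually, while the paper's is more elementary and reuses prior structure.

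Two minor points to tighten if you write this up. First, the reduction to an invariant $\omega$ should be stated via Schur's lemma: decompose both $(\Lambda^3\V)^*$ and the target space of $(2,2)$-forms into isotypic components; $d$ maps the trivial isotypic part into the trivial isotypic part and likewise for the non-trivial part, so the invariant $\beta$ can only be hit by the invariant component of $\omega$. Second, the "obvious nonvanishing of $\beta$" deserves a one-line justification, e.g.\ by nondegeneracy of $\langle-,-\rangle$ together with the fact that a bivector $\A\B-\B\A$ for linearly independent orthogonal $\A,\B$ acts invertibly on $\S$.
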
 

\begin{proof}
	First, to see that $\beta$ has the right symmetry to be a map on
	$\Lambda^4(\V \oplus \S)$, we note that it is antisymmetric on vectors,
	and that because
	\[ \Gamma^0 \A = -\A^\dagger \Gamma^0, \]
	we have:
	\[ \Gamma^0 \A \B = \A^\dagger \B^\dagger \Gamma^0. \]
	Thus:
	\[ \langle \Psi, \A \B \Phi \rangle = \langle \B \A \Psi, \Phi \rangle = -\langle \Phi, \B \A \Psi \rangle, \]
	so we have:
	\[ \langle \Psi, (\A\B - \B\A) \Phi \rangle = \langle \Phi, (\A\B - \B\A) \Psi \rangle. \]
	Thus, $\beta$ is symmetric on spinors.

	Next note that $d\beta$ is a $(1,4)$-form, symmetric on its four spinor
	inputs. It is thus proportional to the polarization of
	\[ \beta(\Psi, \Psi, (\Psi \cdot \Psi), \A) = 
         \Psi * \Psi( \Psi \cdot \Psi, \A) \]
	We encountered this object in Section~\ref{sec:4psi}, where we showed
	that it is proportional to 
	\[ h(\Psi \cdot [ (\Psi \cdot \Psi) \Psi], \A). \]
	Moreover, this last expression vanishes by the 4-$\Psi$'s rule.  So,
	$\beta$ is closed.

	Furthermore, $\beta$ is not exact. To see this, consider the unit
	vector $\left( \begin{smallmatrix} 1 & 0 \\ 0 & -1 \end{smallmatrix}
	\right)$ orthogonal to $V \subseteq \V$. Taking the interior product of
	$\beta$ with this vector, a quick calculation shows:
	\[ \beta(\Psi, \Phi, \left( \begin{smallmatrix} 1 & 0 \\ 0 & -1 \end{smallmatrix} \right), \A) = 2\langle \psi_1, \gamma(A) \phi_1 \rangle + 2\langle \tilde{\gamma}(A) \psi_2, \phi_2 \rangle, \]
	where we have decomposed $\Psi = (\psi_1,\psi_2)$ and $\Phi = (\phi_1,
	\phi_2)$ into their components in $\S = S_+ \oplus S_-$, and $A$ is the
	component of $\A$ in $V$. Restricting to the subalgebra $V \oplus S_+
	\subseteq \V \oplus \S$, we see this is just $\alpha$, up to a factor.

	So, it suffices to check that interior product with $X = \left(
	\begin{smallmatrix} 1 & 0 \\ 0 & -1 \end{smallmatrix} \right)$
	preserves exactness. For then, if $\beta$ were exact, it would
	contradict that fact that $\alpha$ is not. Indeed, let $\omega$ be an
	$n$-cochain on $\T$, and let $X_1, \dots, X_n \in \T$. Then, by our
	formula for the coboundary operator, we have:
	\begin{eqnarray*}
	& & d\omega(X, X_1, \dots, X_n)  = \\
	& & \sum_{i < j} -(-1)^{i+j} (-1)^{|X_i||X_j|} \epsilon^{i-1}_1(i) \epsilon^{j-1}_1(j) \omega(X, [X_i, X_j], X_1, \dots, \hat{X}_i, \dots, \hat{X}_j, \dots X_{n}) \\
	& & + \sum_{i=1}^n (-1)^{1+i} \epsilon_1^{i-1}(i) \omega([X,X_i], X_1, \dots, \hat{X}_i, \dots, X_n),
	\end{eqnarray*}
	where, taking care with signs, we have collected terms involving
	bracketing with $X$ into the second summation. But $X$ is a vector, so
	all brackets with it vanish, and the second summation is zero.

	If we write $i_X \omega$ for the operation of taking the interior
	product of $\omega$ with $X$, we have just shown:
	\[ i_X d \omega = -d i_X \omega \]
	for any $\omega$. In particular, if $\omega = d \theta$ then $i_X
	\omega = d( -i_X \theta )$, and so interior product with $X$ preserves
	exactness, as claimed.
\end{proof}

\section{$L_\infty$-superalgebras} \label{sec:Linfinity}

In the last section, we saw that the 3-$\psi$'s and 4-$\Psi$'s rules are
cocycle conditions for the cocycles $\alpha$ and $\beta$. This sheds some light
on the meaning of these rules, but it prompts an obvious followup question:
what are these cocycles good for?  

There is a very general answer to this question: a cocycle on a Lie
superalgebra lets us extend it to an `$L_\infty$-superalgebra'.  As we touched
on in the Introduction, this is a chain complex equipped with structure like
that of a Lie superalgebra, but where all the laws hold only `up to chain
homotopy'.  We give the precise definition below.

It is well known that that the 2nd cohomology of a Lie algebra $\g$
with coefficients in some representation $R$ classifies `central
extensions' of $\g$ by $R$ \cite{AzcarragaIzquierdo, ChevalleyEilenberg}.
These are short exact sequences of Lie algebras:
\[         0 \to R \to \tilde{\g} \to \g \to 0  \]
where $\tilde{g}$ is arbitrary and $R$ is treated as an abelian Lie algebra
whose image lies in the center of $\tilde{g}$.
The same sort of result is true for Lie superalgebras.  But this is just a
special case of an even more general fact.

Suppose $\g$ is a Lie superalgebra with a representation on a
supervector space $R$.  Then we shall prove that an even $R$-valued
$(n+2)$-cocycle $\omega$ on $\g$ lets us construct an 
$L_\infty$-superalgebra, say $\tilde{\g}$, of the following form:
\[  
\g \stackrel{d}{\longleftarrow} 0 
\stackrel{d}{\longleftarrow} \dots 
\stackrel{d}{\longleftarrow} 0 
\stackrel{d}{\longleftarrow} R .
\]
where only the 0th and and $n$th grades are nonzero.   Moreover,
$\tilde{\g}$ is an \define{extension} of $\g$: there
is a short exact sequence of $L_\infty$-superalgebras 
\[  0 \to \b^n R \to \tilde{\g} \to \g \to 0 . \]
Here $\b^n R$ is the abelian $L_\infty$-superalgebra
with $R$ as its $n$th grade and all the rest zero:
\[             
0 
\stackrel{d}{\longleftarrow} 0 
\stackrel{d}{\longleftarrow} \cdots
\stackrel{d}{\longleftarrow} 0 
\stackrel{d}{\longleftarrow} \R   \]
Note that when $n = 0$ and our vector spaces are all purely even, 
we are back to the familiar construction of Lie algebra extensions 
from 2-cocycles.

Technically, we should be more general than this in defining extensions.
Maps between $L_\infty$-algebras admit homotopies among themselves, and
this allows us to introduce a weakened notion of ``short exact sequence'': 
a fibration sequence in the $(\infty,1)$-category of $L_\infty$-algebras.
In general, these fibration sequences give the right concept of 
extension for $L_\infty$-algebras.  However, for the very special 
extensions we consider here, ordinary short exact sequences are all we 
need. 

It is useful to have a special name for $L_\infty$-superalgebras
whose nonzero terms are all of degree $< n$: we call them
\define{{\boldmath Lie $n$-superalgebras}}.  In this
language, the 3-cocycle $\alpha$ defined in Theorem \ref{thm:3-cocycle}
gives rise to a Lie 2-superalgebra
\[              \T \stackrel{d}{\longleftarrow} \R   \]
extending the supertranslation algebra $\T$ in dimensions $3,4,6,$ and $10$.
Similarly, the 4-cocycle $\beta$ defined in Theorem \ref{thm:4-cocycle}
gives a Lie 3-superalgebra 
\[              \T \stackrel{d}{\longleftarrow} 0 
                   \stackrel{d}{\longleftarrow} \R   \]
extending the supertranslation algebra in dimensions $4,5,7$ and $11$.

Of course, this raises yet another question: what are these Lie
$n$-superalgebras good for? The answer lies in physics: following a suggestion
of Urs Schreiber, we can apply his work with Sati and Stasheff \cite{SSS} and
consider connections on `$n$-bundles'.  These are, roughly speaking, bundles
where the fibers are smooth $n$-categories instead of smooth manifolds.  If we
do this for $n$-bundles---or more precisely, `super-$n$-bundles'---arising from
the Lie $n$-superalgebras constructed here, we find that the connections are
fields that show up naturally in supermembrane and supergravity theories.  

But now let us turn to the business at hand.  In what follows, we shall
use \define{super chain complexes}, which are chain complexes in the
category SuperVect of $\Z_2$-graded vector spaces:
\[  V_0 \stackrel{d}{\longleftarrow}
    V_1 \stackrel{d}{\longleftarrow}
    V_2 \stackrel{d}{\longleftarrow} \cdots \]
Thus each $V_p$ is $\Z_2$-graded and $d$ preserves this grading.

There are thus two gradings in play: the $\Z$-grading by
\define{degree}, and the $\Z_2$-grading on each vector space, which we
call the \define{parity}. We shall require a sign convention to
establish how these gradings interact. If we consider an object of odd
parity and odd degree, is it in fact even overall?  By convention, we
assume that it is. That is, whenever we interchange something of
parity $p$ and degree $q$ with something of parity $p'$ and degree
$q'$, we introduce the sign $(-1)^{(p+q)(p'+q')}$. We shall call the
sum $p+q$ of parity and degree the \define{overall grade}, or when it
will not cause confusion, simply the grade. We denote the overall
grade of $X$ by $|X|$.

We require a compressed notation for signs. If $x_{1}, \ldots, x_{n}$ are
graded, $\sigma \in S_{n}$ a permutation, we define the \define{Koszul sign}
$\epsilon (\sigma) = \epsilon(\sigma; x_{1}, \dots, x_{n})$ by 
\[ x_{1} \cdots x_{n} = \epsilon(\sigma; x_{1}, \ldots, x_{n}) \cdot x_{\sigma(1)} \cdots x_{\sigma(n)}, \]
the sign we would introduce in the free graded-commutative algebra generated by
$x_{1}, \ldots, x_{n}$. Thus, $\epsilon(\sigma)$ encodes all the sign changes
that arise from permuting graded elements. Now define:
\[ \chi(\sigma) = \chi(\sigma; x_{1}, \dots, x_{n}) := \textrm{sgn} (\sigma) \cdot \epsilon(\sigma; x_{1}, \dots, x_{n}). \]
Thus, $\chi(\sigma)$ is the sign we would introduce in the free
graded-anticommutative algebra generated by $x_1, \dots, x_n$.

Yet we shall only be concerned with particular permutations. If $n$ is a
natural number and $1 \leq j \leq n-1$ we say that $\sigma \in S_{n}$ is an
\define{$(j,n-j)$-unshuffle} if
\[ \sigma(1) \leq\sigma(2) \leq \cdots \leq \sigma(j) \hspace{.2in} \textrm{and} \hspace{.2in} \sigma(j+1) \leq \sigma(j+2) \leq \cdots \leq \sigma(n). \] 
Readers familiar with shuffles will recognize unshuffles as their inverses. A
\emph{shuffle} of two ordered sets (such as a deck of cards) is a permutation
of the ordered union preserving the order of each of the given subsets. An
\emph{unshuffle} reverses this process. We denote the collection of all
$(j,n-j)$ unshuffles by $S_{(j,n-j)}$.

The following definition of an $L_{\infty}$-algebra was formulated by
Schlessinger and Stasheff in 1985 \cite{SS}:

\begin{defn} \label{L-alg} An
{\bf $\mathbf{L_{\infty}}$-algebra} is a graded vector space $V$
equipped with a system $\{l_{k}| 1 \leq k < \infty\}$ of linear maps
$l_{k} \maps V^{\otimes k} \rightarrow V$ with $\deg(l_{k}) = k-2$
which are totally antisymmetric in the sense that
\begin{eqnarray}
   l_{k}(x_{\sigma(1)}, \dots,x_{\sigma(k)}) =
   \chi(\sigma)l_{k}(x_{1}, \dots, x_{n})
\label{antisymmetry}
\end{eqnarray}
for all $\sigma \in S_{n}$ and $x_{1}, \dots, x_{n} \in V,$ and,
moreover, the following generalized form of the Jacobi identity
holds for $0 \le n < \infty :$
\begin{eqnarray}
   \displaystyle{\sum_{i+j = n+1}
   \sum_{\sigma \in S_{(i,n-i)}}
   \chi(\sigma)(-1)^{i(j-1)} l_{j}
   (l_{i}(x_{\sigma(1)}, \dots, x_{\sigma(i)}), x_{\sigma(i+1)},
   \ldots, x_{\sigma(n)}) =0,}
\label{megajacobi}
\end{eqnarray}
where the summation is taken over all $(i,n-i)$-unshuffles with $i
\geq 1.$
\end{defn}

The following result shows how to construct $L_{\infty}$-superalgebras from Lie
superalgebra cocycles.  This is the `super' version of a result due to Crans
\cite{BaezCrans}. In this result, we require our cocycle to be even so we can
consider it as a morphism in the category of super vector spaces.

\begin{thm} \label{trivd}
There is a one-to-one correspondence between $L_{\infty}$-superalgebras
consisting of only two nonzero terms $V_{0}$ and $V_{n}$, with $d=0$, and
quadruples $(\g, V, \rho, l_{n+2})$ where $\g$ is a Lie superalgebra, $V$ is a
super vector space, $\rho$ is a representation of $\g$ on $V$, and $l_{n+2}$ is
an even $(n+2)$-cocycle on $\g$ with values in $V$.
\end{thm}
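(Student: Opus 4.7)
The plan is to exploit the grading constraints of a two-term $L_\infty$-super\-algebra concentrated in degrees $0$ and $n$ with $l_1 = d = 0$. Since $l_k$ has degree $k-2$, it can be nonzero only when the degrees of its inputs and its target both lie in the allowed set $\{0, n\}$. A straightforward case analysis shows that the only possibly nonzero pieces are $l_2 \maps V_0 \otimes V_0 \to V_0$, the action piece $l_2 \maps V_0 \otimes V_n \to V_n$ (together with its super-skew-symmetric reflection on $V_n \otimes V_0$), and $l_{n+2} \maps V_0^{\otimes (n+2)} \to V_n$. Setting $\g = V_0$, $V = V_n$, and $\rho(x) = l_2(x, -)$, these pieces are precisely the data of the quadruple $(\g, V, \rho, l_{n+2})$, and the super-skew-symmetry \eqref{antisymmetry} built into each $l_k$ automatically supplies the correct symmetry properties for each piece.

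Next I would unpack the generalized Jacobi identity \eqref{megajacobi} at arity $m = 3$. Because $l_1 = 0$, only the $(i,j) = (2,2)$ contribution survives, yielding the graded Jacobi identity for $l_2$. With all three inputs in $V_0$ this says that $l_2$ makes $\g$ into a Lie superalgebra; with two inputs in $V_0$ and one in $V_n$ it says that $\rho$ is a Lie superalgebra representation of $\g$ on $V$; and with two or more inputs in $V_n$ every term factors through $l_2(V_n \otimes V_n) \subseteq V_{2n} = 0$, so the identity holds automatically.

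The heart of the argument is \eqref{megajacobi} at arity $m = n+3$. The only pairs $(i,j)$ with $i + j = n+4$ that can yield nonzero terms (given $l_1 = 0$ and the vanishings above) are $(i,j) = (2, n+2)$ and $(i,j) = (n+2, 2)$. Taking all inputs in $V_0$, the first family has the shape $l_{n+2}(l_2(x_{\sigma(1)}, x_{\sigma(2)}), x_{\sigma(3)}, \ldots, x_{\sigma(n+3)})$, while the second has the shape $l_2(l_{n+2}(x_{\sigma(1)}, \ldots, x_{\sigma(n+2)}), x_{\sigma(n+3)}) = \pm\, \rho(x_{\sigma(n+3)})\, l_{n+2}(x_{\sigma(1)}, \ldots, x_{\sigma(n+2)})$. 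Up to signs, these two sums are exactly the two summations making up the Chevalley--Eilenberg coboundary $d l_{n+2}$ defined in Section \ref{sec:cohomology}, so the $L_\infty$ identity at this arity is equivalent to the cocycle condition $d l_{n+2} = 0$. If instead one or more inputs lie in $V_n$, every surviving summand factors through a vanishing piece of some $l_k$; the same argument shows that every generalized Jacobi identity at arity $m \notin \{3, n+3\}$ is automatic.

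The principal obstacle will be the sign bookkeeping: one must verify that the signs $\chi(\sigma)(-1)^{i(j-1)}$ appearing in \eqref{megajacobi} at $m = n+3$ reproduce exactly the signs $(-1)^{i+1}(-1)^{|X_i||\omega|}\epsilon^{i-1}_1(i)$ and $(-1)^{i+j}(-1)^{|X_i||X_j|}\epsilon^{i-1}_1(i)\epsilon^{j-1}_1(j)$ appearing in the Chevalley--Eilenberg coboundary formula, keeping careful track of the convention that the overall grade combines parity and homological degree. Once this sign calculation is in hand, the converse direction is immediate: given a quadruple $(\g, V, \rho, l_{n+2})$, defining $l_2$ and $l_{n+2}$ as above and setting every other $l_k = 0$ yields a system satisfying every generalized Jacobi identity, by running the case analysis above in reverse.
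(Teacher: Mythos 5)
Your proposal follows essentially the same route as the paper's proof: identify $\g = V_0$, $V = V_n$, $\rho = l_2(-,-)$, read off the Lie superalgebra axioms and representation property from the arity-$3$ Jacobi identity, and read off the cocycle condition from the arity-$(n+3)$ Jacobi identity via the surviving $(i,j) = (2,n+2)$ and $(n+2,2)$ terms. You are a bit more explicit than the paper about the degree bookkeeping that forces every $l_k$ with $k \notin \{2, n+2\}$ to vanish and makes every Jacobi identity at arity $m \notin \{3, n+3\}$ hold automatically; the paper handles this tersely with the remark that ``the only choices for $i$ and $j$ that lead to nonzero $l_i$ and $l_j$'' are the two above. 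The one real gap is the step you flag yourself: you defer the verification that $\chi(\sigma)(-1)^{i(j-1)}$ reproduces the Chevalley--Eilenberg signs, and this is precisely where the paper's proof does its work---unwrapping the unshuffle sums, splitting off the Koszul factors $\epsilon_1^{i-1}(i)$, using evenness of $l_{n+2}$ to commute the bracket in $[l_{n+2}(\ldots), x_i]$, and landing on $-dl_{n+2}$. Until that computation is carried out, the equivalence between the $L_\infty$ identity at arity $n+3$ and $dl_{n+2}=0$ is asserted rather than proved.
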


\begin{proof}

Given such an $L_{\infty}$-superalgebra we set $\g=
V_0$.  $V_0$ comes equipped with a bracket as part of the
$L_{\infty}$-structure, and since $d$ is trivial, this bracket satisfies the
Jacobi identity on the nose, making $\g$ into a Lie superalgebra. We define $V
= V_{n}$, and note that the bracket also gives a map $\rho \maps \g \tensor V
\to V$, defined by $\rho(x)f = [x,f]$ for $x \in \g, f \in V$. We have
\begin{eqnarray*}
  \rho ([x,y])f &=& [[x,y],f] \\
  		&=& (-1)^{|y||f|}[[x,f],y] + [x,[y,f]] \; \; \; \;
                    \textrm{by $(3)$ of Definition \ref{L-alg}} \\
                &=& (-1)^{|f||y|}[\rho(x)f, y] + [x, \rho(y) f] \\
                &=& -(-1)^{|x||y|}\rho(y)\rho(x)f + \rho(x)\rho(y) f \\
		&=& [\rho(x), \rho(y)]f
\end{eqnarray*}

\noindent
for all $x,y \in \g$ and $f \in V$, so that $\rho$ is indeed a representation.
Finally, the $L_{\infty}$ structure gives a map $l_{n+2} \maps
\Lambda^{n+2} \g \to V$ which is in fact an $(n+2)$-cocycle.  To see this, note
that
\[ 0 = \sum_{i+j = n+4} \sum_{\sigma} \chi(\sigma) (-1)^{i(j-1)} l_{j}(l_{i}(x_{\sigma(1)}, \ldots, x_{\sigma(i)}), x_{\sigma(i+1)}, \ldots, x_{\sigma(n+3)}) \]
where we sum over $(i, (n+3)-i)$-unshuffles $\sigma \in S_{n+3}$.  However, the
only choices for $i$ and $j$ that lead to nonzero $l_{i}$ and $l_{j}$ are
$i=n+2, j=2$ and $i=2, j=n+2$. Thus, the above becomes, with $\sigma$ a
$(n+2, 1)$-unshuffle and $\tau$ a $(2, n+1)$-unshuffle:
\begin{eqnarray*}
0 & = & \sum_{\sigma} \chi(\sigma) (-1)^{n+2} [l_{n+2}(x_{\sigma(1)}, \dots, x_{\sigma(n+2)}), x_{\sigma(n+3)}] \\
  &   & + \sum_{\tau} \chi(\tau) l_{n+2}([x_{\tau(1)}, x_{\tau(2)}], x_{\tau(3)}, \dots, x_{\tau(n+3)}) \\
  & = & \sum_{i=1}^{n+3} (-1)^{n+3-i}(-1)^{n+2} \epsilon^{n+2}_{i+1}(i) [l_{n+2}(x_1, \dots, \hat{x}_i, \dots, x_{n+3}), x_{i}] \\
  &   & + \sum_{1 \leq i < j \leq n+3} (-1)^{i+j+1} (-1)^{|x_i||x_j|} \epsilon^{i-1}_1(i) \epsilon^{j-1}_1(j) l_{n+2}([x_{i}, x_{j}], x_{1}, \dots,\hat{x}_i, \dots, \hat{x}_j, \dots, x_{n+3})
\end{eqnarray*}
On the second line, we have explicitly specified the unshuffles and unwrapped
the signs encoded by $\chi$. Since $l_{n+2}$ is a morphism in SuperVect, it
preserves parity, and thus the element $l_{n+2}(x_1, \dots, \hat{x}_i, \dots,
x_{n+2})$ has parity $|x_1| + \dots + |x_{i-1}| + |x_{i+1}| + \dots +
|x_{n+2}|$. So, we can reorder the bracket in the first term, at the cost of a
sign:
\begin{eqnarray*}
0 & = & \sum_{i=1}^{n+3} -(-1)^{i+1} \epsilon^{i-1}_1(i) [x_i, l_{n+2}(x_{1}, \dots, \hat{x}_i, \dots, x_{n+3})] \\
  &   & + \sum_{1 \leq i < j \leq n+3} -(-1)^{i+j} (-1)^{|x_i||x_j|} \epsilon^{i-1}_1(i) \epsilon^{j-1}_1(j) l_{n+2}([x_{i}, x_{j}], x_{1}, \dots, \hat{x}_i, \dots, \hat{x}_j, \dots, x_{n+3}) \\
  & = & -dl_{n+2} 
\end{eqnarray*}
Here, we have used the fact that $\epsilon_{i+1}^{n+2}(i) (-1)^{|x_i|(|x_1| +
\dots + |x_{i-1}| + |x_{i+1}| + \dots + |x_{n+2}|)} = \epsilon_1^{i-1}(i)$.
Thus, $l_{n+2}$ is indeed a cocycle.

Conversely, given a Lie superalgebra $\g$, a representation $\rho$ of $\g$ on a
vector space $V$, and an even $(n+2)$-cocycle $l_{n+2}$ on $\g$ with values in
$V$, we define our $L_{\infty}$-superalgebra $V$ by setting
$V_{0} = \g$, $V_{n} = V$, $V_i = \{0\}$ for $i \ne 0,n$,
and $d=0$.  It remains to define the system of linear maps
$l_{k}$, which we do as follows: Since $\g$ is a Lie
superalgebra, we have a bracket defined on $V_{0}$. We extend this
bracket to define the map $l_2$, denoted by $[\cdot, \cdot] \maps
V_{i} \otimes V_{j} \rightarrow V_{i+j}$ where $i,j=0,n,$ as
follows:
\[ [x,f] = \rho(x) f, \quad [f,y] = (-1)^{|y||f|} \rho(y)f, \quad [f,g] = 0 \]
for $x,y \in V_0$ and $f,g \in V_n$.
With this definition, the map $[\cdot, \cdot]$ satisfies condition $(1)$ of
Definition \ref{L-alg}. We define $l_{k}=0$ for $3 \leq k \leq n+1$ and $k>
n+2$, and take $l_{n+2}$ to be the given $(n+2)$ cocycle, which satisfies
conditions $(1)$ and $(2)$ of Definition \ref{L-alg} by the cocycle condition. 
\end{proof}

As already mentioned, we call an $L_\infty$-superalgebra 
whose nonzero terms are all of degree $< n$ a \define{\boldmath 
Lie $n$-superalgebra}.  Using the above theorem, we say a
Lie $n$-superalgebra is \define{exact} if only $V_0$ and $V_{n-1}$ are nonzero
and the $(n+1)$-cocycle $l_{n+1}$ is trivial.  The point of this definition is
that we may easily obtain an exact Lie $n$-superalgebra simply from a Lie
superalgebra and a representation, simply by taking the $(n+1)$-cocycle to be
zero.  Non-exact Lie $n$-superalgebras are more interesting.
\begin{cor}

\label{cor:1}
	In dimensions 3, 4, 6 and 10, there exists a non-exact Lie
	2-superalgebra corresponding to the cocycle $\alpha$ given in Theorem
\ref{thm:3-cocycle}.
\end{cor}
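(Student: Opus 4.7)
The plan is to obtain the desired Lie 2-superalgebra as an immediate application of Theorem \ref{trivd} to the 3-cocycle $\alpha$ produced in Theorem \ref{thm:3-cocycle}. Specifically, I would take $\g = \T$ to be the supertranslation algebra in dimension 3, 4, 6, or 10, let $V = \R$ carry the trivial representation of $\T$ (so $\rho = 0$), and take the top structure map to be $l_3 = \alpha$. Since $\alpha$ was shown in Theorem \ref{thm:3-cocycle} to be an even $(1,2)$-form—and in particular an even 3-cocycle—the quadruple $(\T, \R, 0, \alpha)$ satisfies the hypotheses of Theorem \ref{trivd} in the case $n = 1$.

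Theorem \ref{trivd} then produces an $L_\infty$-superalgebra with $V_0 = \T$ in degree 0, $V_1 = \R$ in degree 1, all other graded pieces zero, and $d = 0$. Since the only nonzero terms sit in degrees strictly less than $2$, this is by definition a Lie 2-superalgebra. Its nontrivial structure maps are $l_2$, which reduces to the bracket of $\T$ on $V_0 \otimes V_0$ and vanishes otherwise (because $\rho = 0$), together with $l_3 = \alpha$.

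It remains to verify that this Lie 2-superalgebra is non-exact in the sense defined just before the corollary: exactness would require $l_3 = \alpha$ to be a coboundary. But this is precisely ruled out by the second half of Theorem \ref{thm:3-cocycle}, where any would-be primitive $\omega$ is a $(2,0)$-form forced by $d\omega = \alpha$ to satisfy $g = -\omega$, contradicting the symmetry of $g$ against the antisymmetry of $\omega$. Hence the Lie 2-superalgebra produced above is non-exact.

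There is essentially no obstacle here: the corollary is a direct specialization of the two preceding theorems, and the only real bookkeeping is aligning conventions—a 3-cocycle corresponds to $n + 2 = 3$, so $n = 1$, which yields an $L_\infty$-superalgebra concentrated in degrees $0$ and $1$, i.e.\ a Lie 2-superalgebra. The nontriviality is inherited directly from the cohomological computation already carried out in Theorem \ref{thm:3-cocycle}, so no new estimates or constructions are needed.
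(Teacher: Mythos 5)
Your proof is correct and takes the route the paper clearly intends: the corollary is an immediate consequence of Theorem \ref{trivd} applied to the cocycle from Theorem \ref{thm:3-cocycle}, which is why the paper gives no separate proof. Your bookkeeping (a 3-cocycle means $n+2=3$, so $n=1$, giving nonzero terms in degrees $0$ and $1$, hence a Lie 2-superalgebra) and your appeal to the non-exactness of $\alpha$ to establish non-exactness of the Lie 2-superalgebra are both exactly right.
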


\begin{cor}
\label{cor:2}
	In dimensions 4, 5, 7 and 11, there exists a non-exact Lie
	3-superalgebra corresponding to the cocycle $\beta$ given in Theorem
	\ref{thm:4-cocycle}.
\end{cor}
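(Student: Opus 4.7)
The plan is to apply Theorem \ref{trivd} with $n = 2$. In that theorem, a quadruple $(\g, V, \rho, l_{n+2})$, consisting of a Lie superalgebra, a super vector space, a representation, and an even $(n+2)$-cocycle, yields a Lie $n$-superalgebra supported in degrees $0$ and $n$. I take $\g$ to be the supertranslation algebra $\T = \V \oplus \S$ constructed in Section \ref{sec:n+3}, take $V = \R$ equipped with the trivial representation of $\T$, and take $l_4 = \beta$, the even 4-cocycle produced in Theorem \ref{thm:4-cocycle}. The hypotheses of Theorem \ref{trivd} are met: $\beta$ is an even $\R$-valued 4-cocycle on the Lie superalgebra $\T$, precisely the data the theorem requires.

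Applying the theorem then produces a Lie 3-superalgebra whose underlying chain complex is
\[ \T \stackrel{d}{\longleftarrow} 0 \stackrel{d}{\longleftarrow} \R, \]
with differential $d = 0$, bracket $l_2$ given by the supertranslation bracket on $\T$ (extended trivially to the $\R$ summand, since the representation is trivial), $l_3 = 0$, and $l_4 = \beta$. This is the Lie 3-superalgebra associated to $\beta$.

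It remains to observe that this Lie 3-superalgebra is non-exact. By the definition given just before Corollary \ref{cor:1}, a Lie 3-superalgebra with nonzero terms only in degrees $0$ and $2$ is exact precisely when its top map $l_4$ is a coboundary. But Theorem \ref{thm:4-cocycle} already establishes that $\beta$ is not exact, via the interior-product argument showing that an alleged primitive for $\beta$ would restrict to a primitive for $\alpha$, contradicting Theorem \ref{thm:3-cocycle}. Hence the Lie 3-superalgebra produced above is non-exact, as required. There is no real obstacle here: all the nontrivial content has already been packaged into Theorems \ref{thm:4-cocycle} and \ref{trivd}, and the corollary is just the combination of these two results with the exactness definition.
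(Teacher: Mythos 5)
Your proposal is correct and takes exactly the approach the paper intends: the corollary is stated without proof because it is a direct application of Theorem \ref{trivd} (with the theorem's parameter $n=2$, applied to the quadruple $(\T, \R, \text{trivial}, \beta)$) combined with Theorem \ref{thm:4-cocycle}'s nontriviality claim and the definition of exactness given just before Corollary \ref{cor:1}. You have also correctly resolved the slight indexing overhead (a Lie $3$-superalgebra with nonzero terms in degrees $0$ and $2$, exactness governed by whether $l_4$ is a coboundary), which is where a careless reader might slip.
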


\section{Superstring Lie 2-Algebras, 2-Brane Lie 3-algebras} 
\label{sec:algebras}

We have now met all the stars of our story. First we met the 3-cocycle $\alpha$:
\[ \begin{array}{cccc}
	\alpha \maps & \Lambda^3(\T)              & \to     &  \R \\
	             &  A \wedge \psi \wedge \phi & \mapsto & \langle \psi, A \phi \rangle \\
\end{array}
\]
and saw its cocycle condtion is the 3-$\psi$'s rule in
Theorem~\ref{thm:3-cocycle}. Next, we encountered the 4-cocycle
$\beta$:
\[ \begin{array}{cccc}
	\beta \maps & \Lambda^4(\T)                        & \to     & \R \\
	            & \A \wedge \B \wedge \Psi \wedge \Phi & \mapsto & \langle \Psi, (\A \wedge \B) \Phi \rangle \\
   \end{array}
\]
and saw that its cocycle condition is the 4-$\Psi$'s rule in
Theorem~\ref{thm:4-cocycle}. We explored the meaning of these cocycles in
Section~\ref{sec:Linfinity}: they allow us to create nontrivial extensions of
the supertranslations $\T$ to Lie 2- and Lie 3-superalgebras, which we
exhibited in Corollaries \ref{cor:1} and \ref{cor:2}. But that is not the end
of our story. We can go one step further with $\alpha$ and $\beta$, because
both of them are invariant under the action of the corresponding Lorentz
algebra: $\so(n+1,1)$ in the case of $\alpha$, and $\so(n+2,1)$ for $\beta$.
This is manifestly true, because $\alpha$ and $\beta$ are built from
equivariant maps.

As we shall see, this invariance implies that $\alpha$ and $\beta$ are
cocycles, not merely on the supertranslations, but on the full Poincar\'e
superalgebra---$\siso(n+1,1)$ in the case of $\alpha$:
\[ \siso(n+1,1) = \so(n+1,1) \ltimes \T \]
and $\siso(n+2,1)$ in the case of $\beta$:
\[ \siso(n+2,1) = \so(n+2,1) \ltimes \T \]
We can extend $\alpha$ and $\beta$ to these larger algebras in a trivial way:
define the unique extension which vanishes unless all of its arguments come
from $\T$. Doing this, $\alpha$ and $\beta$ remain cocycles, even though the
Lie bracket (and thus $d$) has changed. Moreover, they remain nontrivial. All
of this is contained in the following proposition:

\begin{prop} Let $\g$ and $\h$ be Lie superalgebras such that $\g$ acts on
	$\h$, and let $R$ be a representation of $\g \ltimes \h$. Given any
	$R$-valued $n$-cochain $\omega$ on $\h$, we can uniquely extend it to
	an $n$-cochain $\tilde{\omega}$ on $\g \ltimes \h$ that takes the value
	of $\omega$ on $\h$ and vanishes on $\g$. When $\omega$ is even, we
	have:
	\begin{enumerate}
		\item $\tilde{\omega}$ is closed if and only if $\omega$ is
			closed and $\g$-equivariant.
		\item $\tilde{\omega}$ is exact if and only if $\omega =
			d\theta$, for $\theta$ a $\g$-equivariant
			$(n-1)$-cochain on $\h$.
	\end{enumerate}
\end{prop}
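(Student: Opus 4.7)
The plan is to exploit the vector space decomposition $\g \ltimes \h = \g \oplus \h$ to bigrade the cochains: a cochain has \emph{bidegree} $(p,q)$ if it is nonzero only on arguments consisting of $p$ entries from $\g$ and $q$ from $\h$. The bracket structure $[\g, \g] \subseteq \g$, $[\g, \h] \subseteq \h$, $[\h, \h] \subseteq \h$ makes the Chevalley--Eilenberg coboundary split as $d = d^{(0,1)} + d^{(1,0)}$, where $d^{(0,1)}$ is the $\h$-coboundary acting on each bidegree slice (with $\g$-arguments as inert parameters) and $d^{(1,0)}$ raises the $\g$-bidegree by one, encoding the $\g$-action on the cochain's values and on its $\h$-arguments together with the internal $\g$-bracket. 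Super signs track mechanically and play no role in the essential structure. By construction, $\tilde\omega$ is concentrated purely in bidegree $(0, n)$.

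For part (1), I would analyze $d\tilde\omega$ bidegree by bidegree. Components of bidegree $(p, n+1-p)$ with $p \geq 2$ vanish automatically because they require evaluating $\tilde\omega$ on a $\g$-argument. The $(0, n+1)$-part equals $d_\h \omega$, and the $(1, n)$-part, evaluated on $(x; h_1, \dots, h_n)$, is a direct combination of $\rho(x)\omega(h_1, \dots, h_n)$ with terms $\omega([x, h_i], h_1, \dots, \hat h_i, \dots, h_n)$, recognized immediately as the $\g$-equivariance obstruction $(\rho(x) \cdot \omega)(h_1, \dots, h_n)$. Setting both to zero is precisely ``$\omega$ closed and $\g$-equivariant.'' The direction $(\Leftarrow)$ of part (2) is the same calculation applied to $\tilde\theta$: equivariance of $\theta$ kills the $(1, n-1)$-part of $d\tilde\theta$, the $(0, n)$-part is $d_\h \theta = \omega$, and higher bidegrees vanish, giving $d\tilde\theta = \tilde\omega$.

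The converse direction $(\Rightarrow)$ of part (2) is the main obstacle. Given $\tilde\omega = d\eta$, decompose $\eta = \sum_p \eta^{(p)}$ by bidegree. Reading $d\eta = \tilde\omega$ bidegree by bidegree, the $(0, n)$-part gives $d_\h \eta^{(0)} = \omega$, while the $(p, n-p)$-parts for $p \geq 1$ yield the cascade $d^{(0,1)}\eta^{(p)} = -d^{(1,0)}\eta^{(p-1)}$, which only tells us $\eta^{(0)}$ is equivariant up to an $\h$-coboundary. The plan is to modify $\eta$ by an exact primitive $d\zeta$, working inductively from the top $\g$-bidegree downward, to obtain a cohomologous primitive $\eta'$ concentrated purely in bidegree $(0, n-1)$. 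Once this is accomplished, the vanishing of all higher-bidegree parts of $d\eta' = \tilde\omega$ forces $\theta := (\eta')^{(0)}$ to be strictly $\g$-equivariant while preserving $d_\h \theta = \omega$. The crux is establishing the step-by-step solvability of the equations determining $\zeta$, which rests on the cascade of relations derived above together with the algebraic structure of the semidirect product.
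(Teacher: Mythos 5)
Your bidegree decomposition $d = d^{(0,1)} + d^{(1,0)}$ is exactly the one the paper uses (there called $(p,q)$-forms, with the identity $d\tilde\omega = \widetilde{d\omega} + e\omega$), and your treatment of part (1) and of the $(\Leftarrow)$ direction of part (2) matches the paper's proof precisely. So far, so good.

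The place where you flag trouble is indeed where the trouble lives. You correctly observe that the bidegree constraints from $d\eta = \tilde\omega$ only give $e\,\eta^{(0)} = -d^{(0,1)}\eta^{(1)}$, i.e.\ the equivariance defect of the candidate primitive is an $\h$-coboundary, not zero. The paper dispatches this point in one sentence: ``we may assume $\chi$ is an even $(0,n-1)$-form, as any other part of $\chi$ is closed and does not contribute to $d\chi$.'' Taken at face value, this is not justified: writing $\chi' = \chi - \chi^{(0)}$, one computes $d\chi' = -d^{(1,0)}\chi^{(0)} = -e\theta$, which vanishes precisely when $\theta = \chi^{(0)}|_\h$ is already $\g$-equivariant --- exactly the conclusion being sought. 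So the paper's justification is circular as stated, and your instinct that a genuine inductive argument is needed is correct.

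However, your own fix is only a sketch: you announce a descent that modifies $\eta$ by an exact term $d\zeta$ so as to strip away the $\g$-positive bidegrees, but you never establish solvability. At the top bidegree one needs $\eta^{(n-1)} = d^{(1,0)}\zeta^{(n-2)}$, i.e.\ that the $d^{(1,0)}$-closed $(n-1,0)$-form $\eta^{(n-1)}$ is $d^{(1,0)}$-\emph{exact}; this is a class in $H^{n-1}(\g, R)$ (where $R$ carries the induced $\g$-module structure), and analogous obstructions arise at each subsequent step. Nothing in the hypotheses forces these classes to vanish, and in fact one cannot simply appeal to the freedom of modifying $\eta$ by a closed cochain, because that closed cochain is subject to the same cascade of constraints. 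So as written, your $(\Rightarrow)$ argument has the same gap as the paper's --- the difference is that you name the gap honestly rather than asserting it away. To turn this into a complete proof you would have to either (i) exhibit the step-by-step solvability using some additional hypothesis (e.g.\ $\g$ semisimple, invoking Whitehead-type vanishing of $H^{\geq 1}(\g, M)$), or (ii) prove the exactness in the filtered complex directly (noting that the spectral sequence of the $\g$-degree filtration does give, essentially for free, that $\tilde\omega$ exact implies $\omega$ exact on $\h$ --- the weaker statement that is all the paper actually uses later --- but that argument alone does not yield $\g$-equivariance of the primitive).
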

\begin{proof}
	As a vector space, $\g \ltimes \h = \g \oplus \h$, so that 
	\[ \Lambda^n(\g \ltimes \h) \iso \bigoplus_{p+q=n} \Lambda^p \g \tensor \Lambda^q \h, \]
	as a vector space. Thanks to this decomposition, we can uniquely
	decompose $n$-cochains on $\g \ltimes \h$ by restricting to the summands.
	In keeping with our prior terminology, we call an $n$-cochain supported
	on $\Lambda^p \g \tensor \Lambda^q \h$ a $(p,q)$-form. Note that
	$\tilde{\omega}$ is just the $n$-cochain $\omega$ regarded as a
	$(0,n)$-form on $\g \ltimes \h$. We shall denote the space of
	$(p,q)$-forms by $C^{p,q}$. 
	
	We have two actions to distinguish: the action of $\g \ltimes \h$ on
	$R$, which we denote by $\rho$, and the action of $\g$ on $\h$, which
	we shall denote simply by the bracket, $[-,-]$. Inspecting the formula
	for the differential:
	\begin{eqnarray*}
		& & d\tilde{\omega}(X_1, \dots, X_{n+1}) = \\
		& & \sum^{n+1}_{i=1} (-1)^{i+1} (-1)^{|X_i||\tilde{\omega}|} \epsilon^{i-1}_1(i) \rho(X_i) \tilde{\omega}(X_1, \dots, \hat{X}_i, \dots, X_{n+1}) \\
		& & + \sum_{i < j} (-1)^{i+j} (-1)^{|X_i||X_j|} \epsilon^{i-1}_1(i) \epsilon^{j-1}_1(j) \tilde{\omega}([X_i, X_j], X_1, \dots, \hat{X}_i, \dots, \hat{X}_j, \dots X_{n+1})
	\end{eqnarray*}
	it is easy to see that
	\[ d \maps C^{p,q} \to C^{p,q+1} \oplus C^{p+1,q}. \]
	In particular:
	\[ d \maps C^{0,n} \to C^{0,n+1} \oplus C^{1,n}. \]
	Given an $n$-cochain $\omega$ on $\h$, it is easy to see that the
	part of $d\tilde{\omega}$ which lies in $C^{0,n+1}$ is just
	$\widetilde{d\omega}$, the extension of the $(n+1)$-cochain $d\omega$
	to $\g \ltimes \h$. 
	
	Let $e\omega$ denote the $(1,n)$-form part of $d\tilde{\omega}$.  To
	express this explicitly, choose $Y_1 \in \g$ and $X_2, \dots, X_{n+1}
	\in \h$. By definition $e\omega(Y_1, X_2, \dots, X_{n+1}) =
	d\tilde{\omega}(Y_1, X_2, \dots, X_{n+1})$, and inspecting the formula
	for the differential once more, we see this consists of only two
	nonzero terms:
	\begin{eqnarray*}
		e\omega(Y_1, X_2, \dots, X_{n+1}) 
		& = & (-1)^{|\tilde{\omega}||Y_1|} \rho(Y_1) \tilde{\omega}(X_2, \dots, X_{n+1}) \\
		&   & + \sum_{i=2}^{n+1} (-1)^{i+1} \epsilon_2^{i-1}(i) \tilde{\omega}([Y_1, X_i], X_2, \dots, \hat{X}_i, \dots, X_{n+1}) \\
		& = & (-1)^{|\omega||Y_1|} \rho(Y_1) \omega(X_2 \wedge \dots \wedge X_{n+1}) - \omega([Y_1, X_2 \wedge \dots \wedge X_{n+1}]) \\
	\end{eqnarray*}
	In particular, note that for even $\omega$,  $e\omega = 0$ if and only
	if $\omega$ is $\g$-equivariant.

	To summarize, for any $n$-cochain $\omega$, we have that
	\[ d\tilde{\omega} = \widetilde{d\omega} + e\omega, \]
	where the first $d$ is defined on $\g \ltimes \h$, while the second is
	only defined on $\h$. The proof of 1 is now immediate: for even
	$\omega$, $d\tilde{\omega} = 0$ if and only if $\widetilde{d\omega} =
	0$ and $e\omega = 0$, which happens if and only $d\omega = 0$ and
	$\omega$ is $\g$-equivariant.

	To prove 2, suppose $\omega$ is even. Assume $\tilde{\omega} = d\chi$,
	for some $(n-1)$-cochain $\chi$ on $\g \ltimes \h$. Because $d\chi$ is
	an even $(0,n)$-form, we may assume $\chi$ is an even $(0,n-1)$-form,
	as any other part of $\chi$ is closed and does not contribute to
	$d\chi$.  Thus $\chi$ is the extension of an even $(n-1)$-cochain
	$\theta$ on $\h$. By our prior formula, we have:
	\[ \tilde{\omega} = d\tilde{\theta} = \widetilde{d\theta} + e\theta \]
	The left-hand side is a $(0,n)$-form, and thus the $(1,n-1)$-form part
	of the right-hand side, $e\theta$, vanishes. Thus $\theta$ is
	$\g$-equivariant, and $\tilde{\omega} = \widetilde{d\theta}$, which
	implies $\omega = d\theta$.  On the other hand, if $\omega = d\theta$
	and $\theta$ is $\g$-equivariant, then $e\theta = 0$ and thus
	$\tilde{\omega} = d\tilde{\theta}$.
\end{proof}

Thus we can extend $\alpha$ and $\beta$ to nonexact cocycles on the Poincar\'e
Lie superalgebra. Thanks to Theorem~\ref{trivd}, we know that $\alpha$ lets us
extend $\siso(n+1,1)$ to a nonexact Lie 2-superalgebra:

\begin{thm} \label{thm:superstring}
	In dimensions 3, 4, 6 and 10, there exists a nonexact Lie 2-superalgebra formed
	by extending the Poincar\'e superalgebra $\siso(n+1,1)$ by the
	3-cocycle $\alpha$, which we call we the \define{superstring Lie
	2-superalgebra}, \define{\boldmath{$\superstring(n+1,1)$}}.
\end{thm}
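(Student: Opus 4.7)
The plan is to verify the hypotheses needed to apply the preceding extension Proposition to lift $\alpha$ from the supertranslation algebra $\T$ to $\siso(n+1,1) = \so(n+1,1) \ltimes \T$, and then to invoke Theorem~\ref{trivd} to obtain the desired Lie 2-superalgebra.

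First I would check that $\alpha$ is $\so(n+1,1)$-equivariant. By definition, $\alpha(\psi,\phi,A) = g(\psi \cdot \phi, A) = \langle \psi, \gamma(A) \phi \rangle$, and each of the ingredients—the spinor pairing $\langle -,-\rangle$, the Clifford action $\gamma$ of $V$ on $S_+$, and the metric $g$—is a $\Spin(n+1,1)$-intertwiner by the propositions assembled in Section~\ref{sec:n+2}. Differentiating the group action gives $\so(n+1,1)$-equivariance of $\alpha$ essentially for free. Combined with the fact that $\alpha$ is an even 3-cocycle on $\T$ (Theorem~\ref{thm:3-cocycle}), part~1 of the preceding Proposition then yields a closed even extension $\tilde{\alpha}$ on $\siso(n+1,1)$ that vanishes as soon as any argument lies in $\so(n+1,1)$.

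Next I would show $\tilde{\alpha}$ is not exact. By part~2 of that same Proposition, exactness of $\tilde{\alpha}$ would force $\alpha = d\theta$ for some $\so(n+1,1)$-equivariant 2-cochain $\theta$ on $\T$. But Theorem~\ref{thm:3-cocycle} already shows $\alpha$ is not exact on $\T$ at all—with or without an equivariance condition on the primitive—so nontriviality is immediate.

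Finally, with a nontrivial even 3-cocycle $\tilde{\alpha}$ on the Lie superalgebra $\siso(n+1,1)$ valued in the trivial representation $\R$, Theorem~\ref{trivd} (taking the index $n$ there to be $1$, so that $n+2=3$) produces a 2-term $L_\infty$-superalgebra with $V_0 = \siso(n+1,1)$, $V_1 = \R$, zero differential, bracket inherited from $\siso(n+1,1)$, and trinary bracket $l_3 = \tilde{\alpha}$. This is exactly a Lie 2-superalgebra of the advertised shape, and it is non-exact since $\tilde{\alpha}$ is non-exact; we name it $\superstring(n+1,1)$. The only step with any real content is the equivariance check, but because every constituent of $\alpha$ was built from $\Spin(n+1,1)$-intertwiners in Section~\ref{sec:n+2}, no new computation is actually required—the main obstacle is purely bookkeeping, correctly threading the extension Proposition together with Theorem~\ref{trivd}.
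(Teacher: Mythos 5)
Your proof matches the paper's own argument exactly: check $\so(n+1,1)$-equivariance of $\alpha$ (which the paper also notes is manifest since $\alpha$ is built from intertwiners), apply the extension Proposition to get a closed, nonexact $\tilde{\alpha}$ on $\siso(n+1,1)$, and then feed that 3-cocycle into Theorem~\ref{trivd} to produce the Lie 2-superalgebra. Your observation that nontriviality on $\T$ a fortiori rules out any equivariant primitive is precisely the point the paper relies on, so this is a correct and faithful reconstruction.
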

\noindent
Likewise, in dimensions one higher, $\beta$ lets us extend $\siso(n+2,1)$ to a
nonexact Lie 3-superalgebra. In the 11-dimensional case, this coincides with
the Lie 3-superalgebra which Sati, Schreiber and Stasheff call $\sugra(10,1)$
\cite{SSS}, which is the Koszul dual of an algebra defined by D'Auria and
Fr\'e~\cite{DAuriaFre}.
\begin{thm} \label{thm:2brane}
	In dimensions 4, 5, 7 and 11, there exists a nonexact Lie 3-superalgebra formed
	by extending the Poincar\'e superalgebra $\siso(n+2,1)$ by the
	4-cocycle $\beta$, which we call the \define{2-brane Lie
	3-superalgebra}, \define{2-\boldmath{$\brane(n+2,1)$}}.
\end{thm}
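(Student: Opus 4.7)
The plan is to mirror the proof of Theorem \ref{thm:superstring}, assembling three ingredients that are already in hand: the fact that $\beta$ is a cocycle on $\T$ (Theorem \ref{thm:4-cocycle}), the Lorentz invariance of $\beta$, and the Proposition immediately preceding Theorem \ref{thm:superstring} which lifts equivariant cocycles on $\h$ to cocycles on $\g \ltimes \h$. Finally, Theorem \ref{trivd} will package the result into a Lie 3-superalgebra.

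First, I would observe that $\beta$ is $\so(n+2,1)$-equivariant. This is manifest from its definition
\[ \beta(\Psi, \Phi, \A, \B) = \langle \Psi, (\A\B - \B\A)\Phi \rangle, \]
because the pairing $\langle -, - \rangle$ is $\Spin(\V)$-invariant and the Clifford action $\A, \B \mapsto \A\B - \B\A$ is built from the equivariant map $\Gamma \maps \V \to \End(\S)$. Hence $\beta$ is invariant under the induced $\so(n+2,1)$-action.

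Next, I would apply the extension Proposition with $\g = \so(n+2,1)$, $\h = \T$, and $R = \R$ the trivial representation. Extending $\beta$ by zero on any argument lying in $\so(n+2,1)$ gives an even $4$-cochain $\tilde{\beta}$ on $\siso(n+2,1) = \so(n+2,1) \ltimes \T$. Part 1 of the Proposition shows $\tilde{\beta}$ is closed, since $\beta$ is both closed (by Theorem \ref{thm:4-cocycle}) and $\so(n+2,1)$-equivariant.

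For nontriviality, suppose toward contradiction that $\tilde{\beta} = d\chi$ for some 3-cochain $\chi$ on $\siso(n+2,1)$. Part 2 of the Proposition then forces $\beta = d\theta$ for some $\so(n+2,1)$-equivariant 3-cochain $\theta$ on $\T$, contradicting the non-exactness of $\beta$ already established in Theorem \ref{thm:4-cocycle}. So $\tilde{\beta}$ represents a nontrivial class in $H^4(\siso(n+2,1), \R)$. Feeding this nontrivial even 4-cocycle into Theorem \ref{trivd} (with $n = 2$) produces an $L_\infty$-superalgebra with $V_0 = \siso(n+2,1)$, $V_2 = \R$, all other $V_i = 0$, $d = 0$, and $l_4 = \tilde{\beta}$: this is a nonexact Lie 3-superalgebra, which we name $\mbox{2-}\brane(n+2,1)$. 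No step is a serious obstacle here, since all the real work was done in Theorem \ref{thm:4-cocycle} and in the equivariance Proposition; the only point requiring care is verifying that the equivariance of $\beta$ is genuinely automatic from its construction, so that the Proposition applies cleanly.
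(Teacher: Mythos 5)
Your proposal is correct and follows essentially the same route the paper takes: the paper spells out this argument for $\alpha$ just before Theorem \ref{thm:superstring}, invoking the $\so \ltimes \T$ extension Proposition (equivariance plus closedness and non-exactness on $\T$ give closedness and non-exactness on $\siso$) and then Theorem \ref{trivd}, and it treats Theorem \ref{thm:2brane} as the exact analogue with $\beta$, $\so(n+2,1)$, and a shift up one degree. The only minor point worth flagging is your phrase ``with $n=2$'' inside Theorem \ref{trivd}: that $n$ is the grading index of the theorem, not the dimension of the division algebra appearing elsewhere as $n$, so it is worth disambiguating, but your usage and conclusion ($V_0 = \siso(n+2,1)$, $V_2 = \R$, $l_4 = \tilde\beta$) are right.
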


\subsection*{Acknowledgements}

We thank Tevian Dray, Robert Helling, Corinne Manogue, Chris Rogers,
Hisham Sati, James Stasheff, and Riccardo Nicoletti for useful
conversations.  We especially thank Urs Schreiber for many discussions
of higher gauge theory and $L_\infty$-superalgebras.  
This work was supported by the FQXi grant RFP2-08-04.

\end{document}